\newtheorem{theorem}{Theorem}
\newtheorem*{theorem*}{Theorem}
\newtheorem{lemma}[theorem]{Lemma}
\theoremstyle{definition}
\newcommand{\qedclaim}{\hfill $\diamond$ \medskip}
\newcommand{\HC}{{\rm HC}}
\newcommand{\clone}[1]{{#1}'}
\newcommand{\walk}{walk }
\newcommand{\hyperpath}{path }
\newcommand{\hyperpaths}{paths }
\theoremstyle{plain}
\begin{document}

\title{The iterated local transitivity model for hypergraphs}\thanks{The authors are funded by NSERC. The third author was funded by an NSERC Postdoctoral Fellowship.}

\author[N.C.\ Behague]{Natalie C.\ Behague}
\author[A.\ Bonato]{Anthony Bonato}
\author[M.A.\ Huggan]{Melissa A.\ Huggan}
\author[R. Malik]{Rehan Malik}
\author[T.G.\ Marbach]{Trent G.\ Marbach}
\address[A1, A2, A3, A4, A5]{Ryerson University, Toronto, Canada}
\email[A1]{(A1) nbehague@ryerson.ca}
\email[A2]{(A2) abonato@ryerson.ca}
\email[A3]{(A3) melissa.huggan@ryerson.ca}
\email[A4]{(A4) rtmalik26@ryerson.ca}
\email[A5]{(A5) trent.marbach@ryerson.ca}

\begin{abstract}
Complex networks are pervasive in the real world, capturing dyadic interactions between pairs of vertices, and a large corpus has emerged on their mining and modeling. However, many phenomena are comprised of polyadic interactions between more than two vertices. Such complex hypergraphs range from emails among groups of individuals, scholarly collaboration, or joint interactions of proteins in living cells. Complex hypergraphs and their models form an emergent topic, requiring new models and techniques.

A key generative principle within social and other complex networks is transitivity, where friends of friends are more likely friends. The previously proposed Iterated Local Transitivity (ILT) model incorporated transitivity as an evolutionary mechanism. The ILT model provably satisfies many observed properties of social networks, such as densification, low average distances, and high clustering coefficients.

We propose a new, generative model for complex hypergraphs based on transitivity, called the Iterated Local Transitivity Hypergraph (or ILTH) model.
In ILTH, we iteratively apply the principle of transitivity to form new hypergraphs. The resulting model generates hypergraphs simulating properties observed in real-world complex hypergraphs, such as densification and low average distances. We consider properties unique to hypergraphs not captured by their 2-section. We show that certain motifs, which are specified
subhypergraphs of small order, have faster growth rates in ILTH hypergraphs than in random hypergraphs with the same order and expected average degree.
We show that the graphs admitting a homomorphism into the 2-section of the initial hypergraph appear as induced subgraphs in the 2-section of ILTH hypergraphs.
We consider new and existing hypergraph clustering coefficients, and show that these coefficients have larger values in ILTH hypergraphs than in comparable random hypergraphs.
\end{abstract}

\keywords{hypergraphs, transitivity, clustering coefficient, 2-section, motifs}
\subjclass{05C65, 05C82, 91D30}

\maketitle

\section{Introduction}

Complex networks are an effective paradigm for pairwise interactions between objects in real-world systems. Such networks capture dyadic interactions in many phenomena, ranging from friendship ties in Facebook, to Bitcoin transactions, to interactions between proteins in living cells. Complex networks evolve via a number of mechanisms such as preferential attachment or copying that predict how links between vertices are formed over time. \emph{Structural balance theory} cites mechanisms to complete triads (that is, subgraphs consisting of three vertices) in social and other complex networks \cite{ek,he}. A central mechanism in balance theory is \emph{transitivity}: if $x$ is a friend of $y,$ and $y$ is a friend of $z,$ then $x$ is a friend of $z$; see, for example, \cite{scott}.

The \emph{Iterated Local Transitivity} (\emph{ILT}) model introduced in \cite{ilt,ilt1} and further studied in \cite{ilm,ildt,mason}, simulates structural properties in complex networks emerging from transitivity. Transitivity gives rise to the notion of \emph{cloning}, where an introduced vertex $x$ is adjacent to all of the neighbors of some pre-existing vertex $y$. Note that in the ILT model, the vertices have local influence within their neighbor sets. Although graphs generated by the model evolve over time, there is a memory of the initial graph hidden in the structure. The ILT model simulates many properties of social networks. For example, as shown in \cite{ilt}, graphs generated by the model densify over time and exhibit bad spectral expansion. In addition, the ILT model generates graphs with the small-world property, which requires graphs to have low diameter and high clustering coefficient compared to random graphs with the same number of vertices and expected average degree.

Dyadic relationships do not always fully capture the dynamics of interactions between larger groups of vertices. For example, interactions among groups of vertices occur in scholarly collaborations, tags attached to the same web post, or metabolic interactions between more than two reactants. In these examples, a polyadic view of interactions is more accurate, giving rise to hypergraphs. A \emph{hypergraph} is a discrete structure with vertices and \emph{hyperedges}, which consists of sets of vertices. Graphs are special cases of hypergraphs, where each hyperedge has cardinality two. While hypergraph theory is less developed than graph theory, it is an emerging topic in the study of complex, real-world systems; see, for example, \cite{benson,benson1,do,estrada,GallagherGoldberg2013,lee,zn}. For a recent article discussing the important role of hypergraphs and other higher-order methods for studying complex networks, see \cite{news}.

In the present paper, we consider a deterministic model for complex hypergraph networks based on transitivity. The model is analogous to the ILT model, although it has its own unusual features. While every hypergraph can be reduced to its 2-section graph, replacing each hyperedge by a clique, not all hypergraph properties are captured by the 2-section. As we demonstrate, the ILT hypergraph model we introduce has properties not evident in its 2-section. Further, the model simulates several properties, such as clustering and motif evolution, more robustly when compared to random hypergraphs with analogous characteristics. For simplicity, we consider throughout $k$-\emph{uniform} hypergraphs, where each hyperedge has cardinality $k$ for a fixed positive integer $k\ge 2.$

The \emph{Iterated Local Transitivity Hypergraph} (\emph{ILTH}) model is defined formally as follows. The model is deterministic and generates $k$-uniform hypergraphs over discrete time-steps. The sole parameter of the model is the initial $k$-uniform hypergraph $H=H_{0}.$ For a nonnegative integer $t$, the hypergraph $H_{t}$ represents the hypergraph at time-step $t$. To form $H_{t+1}$, for each $x \in V(H_{t}$), add a new vertex $x'$ called the \emph{clone} of $x$. We refer to $x$ as the \emph{parent} of $x',$ and $x'$ as the \emph{child} of $x.$ For every hyperedge $e$ of $H_t$ containing $x$, we  add the hyperedge $e'$ to $H_{t+1}$ formed by replacing $x$ with $x'$.
Observe that $e' = (e \setminus \{x \}) \cup \{ x' \}$; we simply write $e'=e-x+x'.$ Note that all existing hyperedges in $H_t$ are also included in $H_{t+1}$. See Figure~\ref{fig1}. We refer to $H_t$ as an \emph{ILTH hypergraph}, and we sometimes write $H_t=$ILTH$_t(H)$ to emphasize the initial hypergraph $H.$ Note that ILTH$_t(H)$ is $k$-uniform for all $t\ge 0.$ We sometimes refer to the formation of the hypergraphs $H_t$
as the \emph{ILTH process}.

The clones form an independent set in $H_{t+1}$, resulting in a doubling of the order of $H_t.$  Unlike in the ILT model, a clone and its parent are not in a hyperedge. For a vertex $x$ in $H_t$, we will sometimes use the notation $x^*$ to mean any \emph{descendant} of $x$; that is, $x^*$ is either $x$ or $x'$ in $H_{t+1}$. Similarly, if $e$ is a hyperedge in $H_t$, then $e^*$ represents one of the descendant hyperedges $e$ or $e - x + x'$ in $H_{t+1}$.
\begin{figure}[h]
\begin{center}
\epsfig{figure=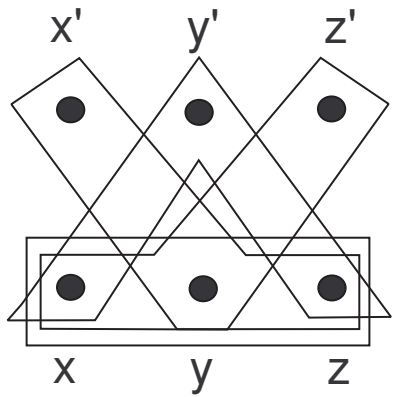,scale=1.5}
\caption{The ILTH model with $H_0$ a hyperedge with $k=3.$}\label{fig1}
\end{center}
\end{figure}

As we will demonstrate, the ILTH model simulates many properties observed in complex hypergraphs, including the small-world property and motif counts. In Section~\ref{2sec}, we derive a densification power law for ILTH hypergraphs, and show that distance and spectral properties follow by properties of the 2-section. We then consider subhypergraphs and motifs in Section~\ref{secmotif}. Motifs are certain hypergraphs with a small number of vertices and hyperedges. In \cite{lee}, it was shown that several real-world, complex hypergraphs have motif counts dramatically higher than comparable random hypergraphs. We show that for certain motifs arising in $k$-uniform hypergraphs from the list in \cite{lee} of 26 motifs formed from three hyperedges, ILTH has a provably higher count than in a random hypergraph with the same average degree. We prove that the 2-section contains isomorphic copies of all graphs admitting a homomorphism to the 2-section of $H_0$ in Theorem~\ref{mainin} and contains only such graphs; as a consequence, certain motifs will be excluded in the ILTH process unless they appear in $H_0$.

In Section~\ref{secclus}, we provide a rigorous analysis of various clustering coefficients for ILTH hypergraphs. Our study of clustering coefficients further validates the small-world property of ILTH hypergraphs, and leads to interesting combinatorial analysis. We consider two clustering coefficients $\HC_1$ and $\HC_2$ and their asymptotic order in ILTH. The clustering coefficient $\HC_1$ was first studied in \cite{estrada}. We introduce the new parameter $\HC_2$ that is a variant of one that first appeared in \cite{zn}, although we argue it is more natural and amenable to analysis. In the case of $\HC_1,$ we show that these clustering coefficients provide higher clustering than is expected in random hypergraphs with the same average degrees. We show an analogous result for $\HC_2$ in a variation of the ILTH model, where clones and parents are adjacent. We finish with a summary of our results along with open problems on the ILTH model.

Throughout the paper, we consider finite, simple, undirected graphs and hypergraphs. For a general reference on graph theory, see \cite{west}. For a reference on hypergraphs, see \cite{berge,berge1}. For background on social and complex networks, see \cite{bbook,bt,chung1}. We define terms and notation for hypergraphs when they first appear throughout the article.

\section{Densification, eigenvalues, and distances}\label{2sec}

Many examples of complex networks \emph{densify} in the sense that the ratio of their number of edges to vertices tends to infinity over time; see \cite{les1}. In this section, we show that the ILTH model always generates hypergraphs that densify, and we give a precise statement below of its densification power law.

Let $n(t)$ be the number of vertices in $H_t$ and let $e(t)$ be the number of hyperedges in $H_t$, respectively. We establish elementary though important recursive formulas for these parameters.
\begin{theorem}\label{ne}
For a nonnegative integer $t,$ we have the following.
\begin{enumerate}
\item $n(t) = 2^tn(0).$
\item  $e(t) = (k+1)^t e(0).$
\end{enumerate}
In particular, we have that $e(t) = \Theta\left(n(t)^{\log_2{(k+1)}}\right)$.
\end{theorem}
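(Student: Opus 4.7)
The plan is a straightforward induction on $t$, setting up recurrences $n(t+1) = 2 n(t)$ and $e(t+1) = (k+1) e(t)$ from the construction, and then solving them and comparing exponents.

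For part (1), the recurrence is immediate from the definition: at each step we introduce one clone $x'$ for every $x \in V(H_t)$, and no vertices are deleted, so $n(t+1) = n(t) + n(t) = 2n(t)$. Iterating from $n(0)$ gives $n(t) = 2^t n(0)$.

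For part (2), the recurrence $e(t+1) = (k+1) e(t)$ is the substantive step. Every hyperedge $e \in E(H_t)$ is retained in $H_{t+1}$, and in addition, for each of the $k$ vertices $x \in e$ we create the new hyperedge $e - x + x'$. The only thing to verify is that these $k \cdot e(t)$ newly created hyperedges are all distinct and disjoint from $E(H_t)$. Distinctness from $E(H_t)$ is clear since every new hyperedge contains exactly one clone vertex, which did not exist in $H_t$. To see that no two new hyperedges coincide, suppose $e_1 - x_1 + x_1' = e_2 - x_2 + x_2'$ for hyperedges $e_1,e_2 \in E(H_t)$ and vertices $x_i \in e_i$. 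Since each side contains exactly one clone vertex, we must have $x_1' = x_2'$ and hence $x_1 = x_2 =: x$; then cancelling gives $e_1 - x = e_2 - x$, and since $x \in e_1 \cap e_2$, this forces $e_1 = e_2$. Thus $e(t+1) = e(t) + k\,e(t) = (k+1)e(t)$, and iterating yields $e(t) = (k+1)^t e(0)$.

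For the final $\Theta$ statement, the plan is to substitute: using $t = \log_2(n(t)/n(0))$ from part (1),
\[
e(t) = (k+1)^t e(0) = 2^{t\log_2(k+1)} e(0) = \left(\frac{n(t)}{n(0)}\right)^{\log_2(k+1)} e(0),
\]
so the ratio $e(t)/n(t)^{\log_2(k+1)}$ is the constant $e(0)/n(0)^{\log_2(k+1)}$, yielding $e(t) = \Theta\!\left(n(t)^{\log_2(k+1)}\right)$ (with implicit constants depending on $H_0$).

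I do not expect any real obstacle here. The only subtle point worth writing out carefully is the distinctness argument for the newly created hyperedges in the inductive step for $e(t)$; everything else is a one-line recurrence followed by elementary algebra.
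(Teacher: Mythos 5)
Your proposal is correct and follows essentially the same route as the paper: set up the recurrences $n(t+1)=2n(t)$ and $e(t+1)=(k+1)e(t)$ from the construction and solve them. The only difference is that you explicitly verify the $k\,e(t)$ new hyperedges are pairwise distinct and distinct from $E(H_t)$, a detail the paper's proof takes for granted; this is a welcome addition but does not change the argument.
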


\begin{proof}
For item (1), for each vertex $v$ in  $H_{t},$ there are two vertices $v$ and $v'$ in $H_{t+1}$. Hence, $n(t+1) = 2n(t)$.

For item (2), notice that for each hyperedge $e$ in $H_{t},$ we add to $H_{t+1}$ the hyperedge $e$ and each of  the $k$ hyperedges $e - x + x'$ where $x$ is a vertex in $e$. We then have that $e(t+1) = (k+1)e(t)$ for all $t$. The result follows.
\end{proof}

As a consequence, the average vertex degree of ILTH$_t(H)$ is given by $$\frac{ke(t)}{n(t)} =\left(\frac{k+1}{2}\right)^t \frac{ke(0)}{n(0)},$$ which increases exponentially with $t$. Hence, we have a densification power law for ILTH hypergraphs.

We next turn to the 2-section of ILTH hypergraphs. For this, we consider a variant on the ILT model for graphs, which we call ${\rm ILT}'$.
Given a graph $G=G_0$, iteratively construct ${\rm ILT}'_t(G),$ where $t \ge 1$ as follows. Suppose that we have ${\rm ILT}'_t(G)$. For each $v \in V({\rm ILT}'_t(G))$, the vertices $v$ and $v'$ are included in ${\rm ILT}'_{t+1}(G)$. For each $uv \in E({\rm ILT}'_t(G))$, the edges $uv$, $uv'$ and $u'v$ are included in ${\rm ILT}'_{t+1}(G)$.
We have the following lemma, whose proof is immediate.
\begin{lemma}\label{lem:2-section}
For a nonnegative integer $t,$ we have that ${\rm ILT}'_t(G)$ is the 2-section of ${\rm ILTH}_t(H)$.
\end{lemma}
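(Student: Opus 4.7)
The plan is to prove this by induction on $t$, since both constructions are defined recursively in parallel fashion. The base case $t=0$ holds by hypothesis that $G$ is the 2-section of $H$. For the inductive step, I would carefully compare the edge set of the 2-section of ${\rm ILTH}_{t+1}(H)$ with the edge set of ${\rm ILT}'_{t+1}(G)$, using the inductive hypothesis that ${\rm ILT}'_t(G)$ equals the 2-section of ${\rm ILTH}_t(H)$.

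First, I would observe that the vertex sets agree: ${\rm ILTH}_{t+1}(H)$ and ${\rm ILT}'_{t+1}(G)$ both consist of the vertices of stage $t$ together with one clone $x'$ for each vertex $x$ of stage $t$, so we may identify them. Next I would partition edges of the 2-section of ${\rm ILTH}_{t+1}(H)$ into three types based on whether each endpoint is an original vertex or a clone. The key observation is that every \emph{new} hyperedge at time $t+1$ has the form $e - x + x'$, which contains exactly one clone, namely $x'$; old hyperedges contain no clones at all. Thus no hyperedge of ${\rm ILTH}_{t+1}(H)$ contains two clones, so no two clones are adjacent in its 2-section.

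Then I would check the three cases. For two original vertices $u,v$: they share a hyperedge in ${\rm ILTH}_{t+1}(H)$ if and only if they share one in ${\rm ILTH}_t(H)$, since any new hyperedge containing both would need to contain $u$ and $v$, but new hyperedges contain a clone. For an original $u$ and a clone $v'$: they share a hyperedge of ${\rm ILTH}_{t+1}(H)$ if and only if some $e - v + v'$ contains $u$, which happens exactly when $u$ and $v$ both lie in a hyperedge $e$ of ${\rm ILTH}_t(H)$, i.e.\ exactly when $uv$ is an edge of the 2-section of ${\rm ILTH}_t(H)$. For two clones $u',v'$: non-adjacent, as noted above.

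Applying the inductive hypothesis, these three cases translate to: $uv$ is an edge of the 2-section of ${\rm ILTH}_{t+1}(H)$ iff $uv \in E({\rm ILT}'_t(G))$; $uv'$ is an edge iff $uv \in E({\rm ILT}'_t(G))$; and there are no $u'v'$ edges. Comparing with the definition of ${\rm ILT}'$, which for each edge $uv \in E({\rm ILT}'_t(G))$ puts precisely $uv$, $uv'$, and $u'v$ into ${\rm ILT}'_{t+1}(G)$ and adds no other edges, we see the two edge sets coincide. The only real substance is the observation that new hyperedges contain exactly one clone, which is why this proof is ``immediate'' — the rest is bookkeeping.
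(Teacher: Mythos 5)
The paper offers no written proof (it declares the lemma ``immediate''), and your induction is exactly the argument being waved at: identify the vertex sets, note that no hyperedge of $H_{t+1}$ contains two clones, and match the three edge types against the definition of ${\rm ILT}'$. The structure and conclusion are correct.

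One justification is misstated, though. In the original--original case you write that $u$ and $v$ share a hyperedge of ${\rm ILTH}_{t+1}(H)$ iff they share one of ${\rm ILTH}_t(H)$ ``since any new hyperedge containing both would need to contain $u$ and $v$, but new hyperedges contain a clone.'' For $k\ge 3$ a new hyperedge $e-x+x'$ contains $k-1$ original vertices, so it can perfectly well contain both $u$ and $v$ (take $x\notin\{u,v\}$); the presence of a clone does not exclude this. The equivalence is still true, but for a different reason: if a new hyperedge $e-x+x'$ contains both $u$ and $v$, then its parent $e\in E(H_t)$ already contains both, so no \emph{new} adjacencies between original vertices are created. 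With that one sentence repaired, the proof is complete.
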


We use the notation $n_t$ and $e_t$ for the order and size of ${\rm ILT}'_t(G)$. Observe that $n_t = 2^tn_0$ and $e_t=3^te_0$ edges. An implication of Lemma~\ref{lem:2-section} is that any hypergraph property that depends solely on the 2-section behaves the same way for the hypergraph model ${\rm ILTH}$ as it does for the graph model ILT$'$. Such properties are not truly exploiting the hypergraph structures evident in ${\rm ILTH}$. We briefly discuss some of these properties, including the adjacency matrix, the diameter, and the average distance.

The adjacency matrix $A(H)$ for a hypergraph $H$ has rows and columns indexed by the vertices of $H$ and entry $1$ if $u \ne v$ and there is some hyperedge of $H$ containing both $u$ and $v$, and $0$ otherwise.  It is evident that this is the same as the adjacency matrix of the 2-section of $H$. In particular, to analyse the adjacency matrix of ${\rm ILTH}_t(H)$ we need only consider the adjacency matrix of ${\rm ILT}'_t(G)$, where $G$ is the 2-section of $H$.

If ${\rm ILT}'_t(G)$ has $n \times n$ adjacency matrix $A$, then ${\rm ILT}'_{t+1}(G)$ has $2n \times 2n$ adjacency matrix
$$
\left(\begin{matrix}
A & A \\
A & \textbf{0}
\end{matrix} \right),
$$
where $\textbf{0}$ is the $n \times n$ all-zeros matrix. It is straightforward to verify that if $A$ has eigenvalue $\rho$ with associated eigenvector  $\mathbf{v},$ then $
\left(\begin{smallmatrix}
A & A \\
A & \textbf{0}
\end{smallmatrix} \right)$  has eigenvalues $\frac{1 \pm \sqrt{5}}{2}\rho$ with associated eigenvectors
$
\left(\begin{smallmatrix}
\frac{1 \pm \sqrt{5}}{2} \mathbf{v} \\
\mathbf{v}
\end{smallmatrix} \right)
$.
In particular, given the eigenvalues for the graph $G$, one can calculate the eigenvalues for ${\rm ILT}'_t(G)$.

We next consider distance in ILTH hypergraphs. A \emph{\walk}of length $k$ connecting two vertices $u$ and $v$ in a hypergraph is a sequence of hyperedges $e_1,e_2,\dots, e_k$ such that $u \in e_1$, $v \in e_k$ and $e_i \cap e_{i+1} \ne \emptyset$, for all $1 \le i < k$. We say that the \emph{distance} between two vertices $u,v,$ written $d(u,v),$ is the minimum length of a \walk connecting $u$ and $v$. This is the same as the distance between two vertices $u$ and $v$ in the 2-section of the hypergraph. In particular, to analyze distances within  ${\rm ILTH}_t(H)$ we could only consider distances in ${\rm ILT}'_t(G)$, where $G$ is the 2-section of $H$, but it is equally convenient to analyse ILTH directly.

Consider vertices $u,v$ in $H_t$ with $u \ne v$. Let $d = d(u,v)$ and let  $e_1,e_2,\dots, e_d$ be a minimum length \walk connecting them. We then have that in $H_{t+1}$,
\begin{enumerate}
\item $d(u,v) = d$, using the \walk  $e_1,e_2,\dots, e_d$;
\item $d(u,v') = d$, using the \walk  $e_1,e_2,\dots, e_d-v+v'$;
\item $d(u',v) = d$, using the \walk  $e_1-u+u',e_2,\dots, e_d$;
\item $d(u',v')=d$ if $d \ge 2$, using the \walk  $e_1-u+u',e_2,\dots, e_d-v+v'$; and
\item $d(u',v')=2$ if $d = 1$, using the \walk  $e_1-u+u' , e_1-v+v'$, so long as $k \ge 3$.
\end{enumerate}
Note that in the case of the final item, there is no \walk of length one as there is no hyperedge containing two clones. In the other cases, there can be no {walks} 
of length less than $d$ else the predecessors of these edges would form a \walk from $u$ to $v$ in $H_t$ of length less than $d$.

The \emph{diameter} of a hypergraph is the maximum distance between any pair of vertices. We find immediately that the diameter of $H_{t+1}$  is the maximum of 2 and the diameter of $H_t$, and, iterating this, is the maximum of 2 and the diameter of $H_0$. In either case, the diameter is a constant, independent of $t$.

To end this section, we determine the average distance between any pair of vertices in $H_t$.
Let $W(t)$ be the sum of the distances in $H_t$ or \emph{Wiener index}, written
 $$W(t) = \sum_{u,v \in V( H_t)} d(u,v).$$
Assuming that $H_0$ has no isolated vertices and so $H_t$ has no isolated vertices for all $t\ge1$, by our calculations pertaining to distances above, we obtain that:
\begin{align*}
W(t+1)&=  \sum_{u,v \in V( H_{t+1})} d(u,v) \\
&= \sum_{u \ne v \in V(H_t )}  d(u,v) + d(u',v) + d(u,v') + d(u',v') + \sum_{u \in V( H_t)} d(u,u') + d(u',u) \\
&= 4 \left( \sum_{u,v \in V( H_t) }  d(u,v) \right) + |\{ u \ne v \in V(H_t): d(u,v) = 1 \}| + 4n(t) \\
 &= 4 W(t) + 2e(t) + 4n(t).
\end{align*}

Solving this recurrence gives that
\begin{align*}
W(t) &= 4^t(W(0) + 2e(0) + 2n(0)) - 2e(t)  - 2n(t) \\
&= 4^t(W(0) + 2e(0) + 2n(0)) - 2 \cdot 3^te(0) - 2^{t+1}n(0).
\end{align*}
Thus, the average distance is given by
$$\frac{2W(t)}{n(t)(n(t)-1))} = \frac{ 4^t(2W(0) + 4e(0) + 4n(0)) - 4\cdot 3^te(0) - 2^{t+1}2n(0)}{4^tn(0)^2 - 2^tn(0)},$$
which tends to $\frac{2W(0) + 4e(0) + 4n_0}{n(0)^2}$ as $t$ tends to infinity. We therefore have that ILTH hypergraphs exhibit a constant average distance, as is found in many real-world hypergraphs; see \cite{do}.

\section{Subhypergraphs and motifs}\label{secmotif}

We next consider subhypergraphs of the ILTH model, and our first approach is to consider the induced subgraphs of the 2-section. In Theorem~\ref{mainin}, it is shown that a graph appears in the 2-section of an ILTH hypergraph exactly when it admits a homomorphism to the 2-section of $H_0.$ The theorem guarantees the absence of many kinds of induced subhypergraphs; for example, no hypergraph clique appears in an ILTH hypergraph with larger order than $H_0.$ We then turn to counting certain small order subhypergraphs, or motifs. Motifs are important in complex networks, as they are one measure of similarity for graphs. For example, the counts of $3-$ and $4-$vertex subgraphs gives a similarity measure for distinct graphs; see \cite{mgeop,pr} for implementations of this approach using machine learning. Hypergraph motifs were studied by several authors; see for example, \cite{aksoy,benson1,lee}. In \cite{lee}, motif counts were analyzed across various real-world complex hypergraphs and compared to random hypergraphs.    We show in this section that in ILTH hypergraphs, the growth rate for certain motifs is higher than in comparable random hypergraphs.

\subsection{Induced subgraphs of the 2-section}

For all $t\ge 0,$ $H_t$ is an induced subhypergraph of $H_{t+1}.$ There exists a homomorphism $f_t$ from $H_{t+1}$ to $H_{t}$ by mapping each clone to its parent, and fixing all other vertices. Note that $F_t = f_1 \circ f_2 \circ \dots \circ f_t$ is a homomorphism from $H_t$ to $H_0.$ As a result, the clique and chromatic numbers of $H_t$ are bounded above by those of $H_0.$
This observation puts limitations on the kinds of subgraphs that $H_t$ contains. For additional background on graph homomorphisms, the reader is directed to \cite{hn}.

The \emph{age} of a hypergraph is its set of isomorphism types of induced subhypergraphs. As each $F_t$ is a homomorphism, we have that no $H_t$ contains $k$-uniform cliques larger than those in $H_0$. In particular, the set of ages of an {\rm ILTH} hypergraph does not contain all hypergraphs. This contrasts with the ILT model, where all graphs occur in the set of ages of ILT-graphs; see \cite{ilm}.

Characterizing the ages of {\rm ILTH} hypergraphs remains an open problem. The next result solves the analogous problem for the ages of $2$-sections of {\rm ILTH} hypergraphs. For a fixed graph $G$ and family of graphs $\mathcal{G}$, we say that $\mathcal{G}$ is $G$-\emph{hom-universal} if the set of ages of $\mathcal{G}$ consists of all finite graphs admitting a homomorphism to $G.$

\begin{theorem}\label{mainin}
A graph $G$ admits a homomorphism to $G(H_0)$ if and only if $G$ is an induced subgraph of $G(H_t),$ for some integer $t \ge 0$ and where $G(H_0)$ is the 2-section of $H_0$. In particular, the set of ages of 2-sections of hypergraphs in $\mathrm{{\rm ILTH}}(H_0)$ is $G(H_0)$-hom-universal.
\end{theorem}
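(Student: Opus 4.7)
My plan is to split the proof into the two implications. The forward direction is short. The paragraph just before the theorem records that $F_t := f_1 \circ \cdots \circ f_t$ is a hypergraph homomorphism $H_t \to H_0$ sending every clone to its parent. Because a hypergraph homomorphism preserves membership in a common hyperedge, $F_t$ descends to a graph homomorphism $G(H_t) \to G(H_0)$. Restricting to any induced subgraph $G$ of $G(H_t)$ then yields a homomorphism $G \to G(H_0)$.

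For the converse, the first step is to obtain an explicit description of $G(H_t)$. I would index the vertices of $H_t$ by pairs $(u, b)$ with $u \in V(H_0)$ and $b \in \{0,1\}^t$, where the $s$-th bit of $b$ records whether we took the original ($0$) or the clone ($1$) at cloning step $s$. An induction on $t$, following the construction of $H_{t+1}$ from $H_t$, shows that the hyperedges of $H_t$ are in bijection with pairs $(e, F)$, where $e$ is a hyperedge of $H_0$ and $F : \{1,\dots,t\} \to e \cup \{*\}$; the corresponding hyperedge is $\{(u, b_u) : u \in e\}$ with $b_u^s = 1$ iff $F(s) = u$. From this I would read off the adjacency criterion for the 2-section: $(u, b) \sim (u', b')$ in $G(H_t)$ iff $u \ne u'$, $uu' \in E(G(H_0))$, and $b$, $b'$ have no coordinate at which both equal $1$. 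Necessity is immediate from the bit formula; sufficiency is witnessed by the explicit function $F$ with $F(s) = u$ when $b^s = 1$, $F(s) = u'$ when $(b')^s = 1$, and $F(s) = *$ otherwise.

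Next I construct the induced copy of $G$ inside $G(H_t)$. Let $V(G) = \{v_1, \dots, v_n\}$ and let $\phi$ be the given homomorphism. Call an unordered pair $\{i, j\}$ a \emph{bad pair} if $v_i v_j \notin E(G)$ but $\phi(v_i)\phi(v_j) \in E(G(H_0))$, and let $m$ be the number of bad pairs. Take $t = n + m$ and define $b_i \in \{0,1\}^t$ by: for $s \le n$, $b_i^s = 1$ iff $s = i$ (the ``identifier'' bits); and for $s = n + r$ with $1 \le r \le m$, $b_i^s = 1$ iff $v_i$ lies in the $r$-th bad pair (the ``pair'' bits). Set $\tilde v_i = (\phi(v_i), b_i)$. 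A direct check against the adjacency criterion then shows that $v_i \mapsto \tilde v_i$ is an isomorphism from $G$ onto the induced subgraph of $G(H_t)$ on $\{\tilde v_1, \dots, \tilde v_n\}$: edges of $G$ have disjoint $1$-supports because no identifier bit is shared and edges are not bad pairs; each bad pair shares its pair bit so those pairs are non-adjacent in $G(H_t)$; non-edges of $G$ whose $\phi$-images are non-adjacent or equal are automatically non-adjacent in $G(H_t)$; and the identifier bits distinguish $\tilde v_i$ from $\tilde v_j$ even when $\phi$ identifies $v_i$ and $v_j$.

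The main obstacle is the adjacency criterion itself, which requires a careful inductive description of the hyperedges of $H_t$ and, for sufficiency, the explicit construction of a witnessing function $F$. Once that is in place, the rest is a short combinatorial design: the identifier bits compensate for any collapsing by $\phi$, and the pair bits encode precisely those non-edges of $G$ whose endpoints are adjacent under $\phi$ in $G(H_0)$.
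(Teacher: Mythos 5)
Your argument is correct, but it reaches the forward direction by a genuinely different route from the paper. The paper's proof is an iterative repair: starting from an arbitrary homomorphism $f\colon G \to G(H_0)$, it first removes collisions one at a time (if $f(u)=f(v)$, redirect $v$ to the clone of $f(u)$ in the next iteration, increasing the image size by one), and then kills unwanted edges one at a time (if $uv \notin E(G)$ but $f(u)f(v)$ is an edge, send both $u$ and $v$ to their clones; since clones form an independent set while retaining all other adjacencies of their parents, exactly that one edge disappears). You instead coordinatize $H_t$ outright — vertices as pairs $(u,b)$ with $b \in \{0,1\}^t$, hyperedges as pairs $(e,F)$ with $F\colon \{1,\dots,t\}\to e\cup\{*\}$ — extract the adjacency criterion ($u\ne u'$, $uu'\in E(G(H_0))$, and disjoint supports of $b$ and $b'$), and then build the embedding in a single step using identifier bits and pair bits. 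Both are sound; I checked your adjacency criterion and the three cases of the final verification and they go through. What your version buys is an explicit bound $t = n+m \le n + \binom{n}{2}$ on the number of iterations needed, plus a reusable structural description of $G(H_t)$; the cost is the inductive verification that the hyperedges of $H_t$ are exactly the sets $\{(u,b_u):u\in e\}$ with $b_u^s=1$ iff $F(s)=u$, which you should write out (it follows because each hyperedge of $H_{t+1}$ is either an old hyperedge of $H_t$ or has exactly one cloned vertex, corresponding to extending $F$ by $*$ or by one element of $e$). Your reverse direction is the same as the paper's: compose the inclusion with the retraction $F_t$.
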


\begin{proof} The reverse direction follows since for an induced subgraph $G$ of $G(H_t),$ the inclusion map is a homomorphism from $G$ to $G(H_t).$ Composing with $F_t$ gives a homomorphism from $G$ to $G(H_0).$

For the forward direction, suppose that $G$ admits a homomorphism $f$ to $G(H_0)$. Let $u,v\in V(G)$ be two vertices such that $f(u)=f(v)$.
Define the homomorphism $f'$ to $G(H_1)$ as $f'(x) = f(x)$ if $x \neq v,$ and $f(v)$ is the clone of the vertex $f(u)$.
We then note that the number of vertices in the codomain of $f'$ is one larger than the number of vertices in the codomain of $f$.
We may repeat this procedure until we find an injective homomorphism $f_i$ from $G$ to $G(H_i)$, for some $i \ge 0$.

Suppose that there are two vertices $u,v$ in $G$ which are not neighbors but such that $f_i(u) f_i(v)$ is an edge in $G(H_i)$. We can define a new injective homomorphism $f_{i+1}$ to $G(H_{i+1})$ by $f_{i+1}(x) = f_{i}(x)$ if $x \notin \{u,v\}$, $f_{i+1}(u)$ is the clone of $f_i(u)$, and $f_{i+1}(v)$ is the clone of $f_i(v)$.
We then have that the induced subgraph of $f_i(G)$ and $f_{i+1}(G)$ differ only in the edge $f_i(u) f_i(v)$, as this edge does not exist in $f_{i+1}(G)$.
We can repeat this procedure to construct an injective homomorphism $f_j$ from $G$ to $G(H_j)$ for some $j,$ with the property that for all $u,v \in V(G)$ if $f_j(u) f_j(v)$ is an edge in $G(H_j)$, then $uv$ is an edge in $G$. Hence, the subgraph induced by the vertices in $f_j(G)$ in $G(H_j)$ is isomorphic to $G$.
\end{proof}

\subsection{Motifs}
We now turn to counting motifs, which are certain types of subhypergraphs. In \cite{lee}, 26 distinct motifs were studied for three interacting hyperedges $e_1,$ $e_2,$ and $e_3.$ Motif counts may be viewed as a similarity measure for hypergraphs, such as when we are comparing real-world hypergraphs and synthetic ones derived from models.

The different types of motifs emerge by considering which of the following seven regions are nonempty:  $$e_1 \setminus (e_2 \cup e_3),e_2 \setminus (e_1 \cup e_3),e_3 \setminus (e_1 \cup e_2),e_1\cap e_2 \setminus e_3,e_2\cap e_3 \setminus e_1,e_1\cap e_3 \setminus e_2, e_1 \cap e_2 \cap e_3.$$ We may compactly reference motifs by a binary sequence $$i_1i_2i_3i_4i_5i_6i_7,$$ so that for all $j$, $i_j =1$ exactly when there is at least one element in the corresponding region. We refer to the different motifs as \emph{motif types}. See Figure~\ref{fig:m11} for an example.
\begin{figure}[h]
\begin{center}
\epsfig{figure=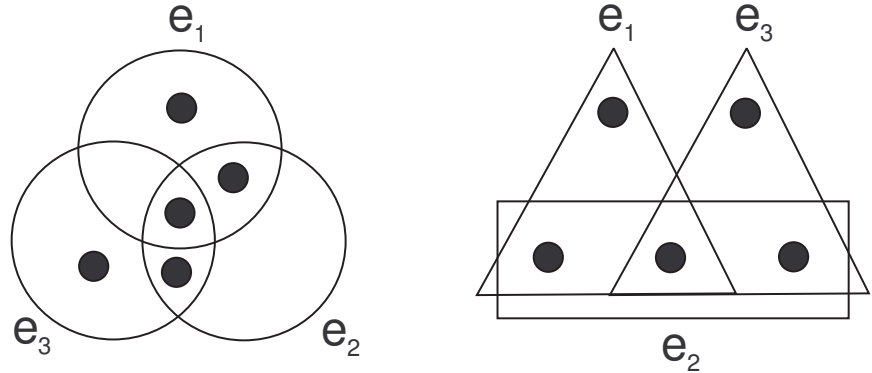,scale=1.5}
\caption{The motif type 11 or 1011101. On the left, we represent this motif via a Venn diagram, where the vertex in a region implies it is nonempty. On the right, we have an example of a $3$-uniform hypergraph realizing this motif.}\label{fig:m11}
\end{center}
\end{figure}
We may generalize this notation to a tuple of nonnegative integers, quantifying the number of elements in each region. The \emph{cardinality vector} of a motif composed of the three hyperedges $e_1, e_2, e_3$ is defined as the $7$-tuple:
\begingroup\makeatletter\def\f@size{10} {$$\check@mathfonts (a,b,c,d,e,f,g)= (|e_1 \setminus (e_2 \cup e_3)|,|e_2 \setminus (e_1 \cup e_3)|, |e_3 \setminus (e_1 \cup e_2)|, |e_1\cap e_2 \setminus e_3|, |e_2\cap e_3 \setminus e_1|, |e_1\cap e_3 \setminus e_2|, |e_1 \cap e_2 \cap e_3|).$$} \endgroup Note that a motif contains $a+b+c+d+e+f+g$ vertices. Further, we have that $|e_1| = a+d+f+g = k$, $|e_2| = b+d+e+g = k$, and $|e_3| = c+e+f+g = k$.

In general hypergraphs, there are 26 non-isomorphic motif types; however, we note that only 11 motif types occur in $k$-regular hypergraphs. With numbering taken from \cite{lee}, these motif types are:
\begin{enumerate}
\item Motif type 2: 1110001,
\item Motif type 6: 1110101,
\item Motif type 11: 1011101,
\item Motif type 12: 1111101,
\item Motif type 13: 0001111,
\item Motif type 14: 1001111,
\item Motif type 15: 1011111,
\item Motif type 16: 1111111,
\item Motif type 24: 1001110,
\item Motif type 25: 1011110,
\item Motif type 26: 1111110.
\end{enumerate}
We keep the numbering from \cite{lee} for brevity; for example, we refer to motif 11 rather than 1011101. We focus on these motif types since they always occur in the ILTH model and have higher counts when compared to random hypergraphs, as we describe below. Interestingly, motifs 11 and 12 are more prevalent in the co-authorship hypergraphs compared to random hypergraphs, as shown in \cite{lee}. The same conclusion holds for motif 16 for tag hypergraphs. These observations lend credence to the view that ILTH hypergraphs simulate properties observed in real-world, complex hypergraphs.

Let $\alpha_i$ be the maximum number of vertices that can occur in a motif of type $i$ in a $k$-uniform hypergraph. Each value of $\alpha_i$ can be calculated explicitly, and each calculation is straightforward. For example, we may calculate $\alpha_{14}$ as follows. Suppose that the motif in question has cardinality vector $(a,0,0,d,e,f,g)$. Without loss of generality we have that $$k = a+d+f+g = d+e+g = e+f+g,$$ as each hyperedge contains $k$ vertices. It therefore immediately follows that $d=f$. The total number of vertices is $$a+d+e+f+g = k+(k-d-g),$$ which is maximized when $d=f=g=1$ (as $d,g,f>0$ in a motif of type $14$), yielding $\alpha_{14} = 2k-2$. As the remaining calculations are similar to the above, we omit them, and present the results in Table~\ref{tab:alphas}.
\begin{center}
\begin{table}[h]
\begin{tabular}{|l|l|l|l|l|l|l|l|l|l|l|l|l|l|l|l|l|l|l|l|l|l|l|l|l|l|}
\hline
$i$&2&6&11&12&13&14&15&16&24&25&26\\ \hline
$\alpha_i$&$3k-2$&$3k-3$&$2k-1$&$3k-4$&$\lfloor \frac{3}{2}k-1\rfloor$&$2k-2$&$2k-2$&$3k-5$&$2k-1$&$2k-1$&$3k-3$\\ \hline
\end{tabular}
\caption{The maximum number of vertices $\alpha_i$ in a motif of type $i$ possible in a $k$-uniform hypergraph.}
\label{tab:alphas}
\end{table}
\end{center}

\begin{lemma} \label{lm:fromCardVect}
If $H_t$ contains $x$ motifs of type $i$ with cardinality vector  $(a,b,c,d,e,f,g)$, then $H_{t+1}$ contains at least $x ( g + (c+1)d + (b+1)f + (a+1)e + (a+1)(b+1)(c+1) )$ motifs of type $i$ with cardinality vector  $(a,b,c,d,e,f,g)$.
\end{lemma}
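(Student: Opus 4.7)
My plan is to prove the lemma by exhibiting, for each motif $M = (e_1, e_2, e_3)$ in $H_t$ with cardinality vector $(a,b,c,d,e,f,g)$, a family $\mathcal{F}_M$ of pairwise distinct motifs in $H_{t+1}$ of the same type and cardinality vector, with $|\mathcal{F}_M| = g + (c+1)d + (b+1)f + (a+1)e + (a+1)(b+1)(c+1)$, and then checking that $\mathcal{F}_M$ and $\mathcal{F}_{M'}$ are disjoint for distinct $M, M'$. Summing over the $x$ motifs then yields the bound.

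The construction rests on a single atomic operation I will call \emph{full replacement} of a motif vertex $v$. Let $I_v \subseteq \{1,2,3\}$ be the set of indices with $v \in e_i$. To fully replace $v$, I simultaneously substitute $e_i$ by the descendant $e_i - v + v'$ for every $i \in I_v$, and leave the remaining $e_i$ untouched. In the resulting triple $(e_1^*, e_2^*, e_3^*)$, the clone $v'$ lies in $e_i^*$ exactly when $i \in I_v$, so $v'$ inherits the membership profile of $v$, while $v$ itself no longer appears in any hyperedge of the triple; consequently every pairwise and triple intersection retains its old cardinality, and the cardinality vector is preserved. Multiple vertices $v_1, \ldots, v_r$ may be fully replaced in parallel, provided their membership sets $I_{v_1}, \ldots, I_{v_r}$ are pairwise disjoint, since each descendant hyperedge in a single ILTH step is allowed to absorb at most one clone. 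Partitioning the permissible joint choices according to which, if any, of the multi-intersection regions $d, e, f, g$ contains a replaced vertex yields five mutually exclusive cases. If no vertex of $d \cup e \cup f \cup g$ is replaced, I may independently choose at most one vertex from each singleton region $a$, $b$, $c$, giving $(a+1)(b+1)(c+1)$ configurations (the ``do nothing'' configuration recovers $M$ itself). Replacing a vertex of $g$ occupies slots $\{1,2,3\}$ and blocks all further replacements, giving $g$ configurations. Replacing a vertex of $d$ occupies slots $\{1,2\}$ and leaves slot $3$ free for an optional replacement from $c$, giving $d(c+1)$ configurations; the cases of $f$ and $e$ are symmetric and contribute $f(b+1)$ and $e(a+1)$.

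The only step beyond bookkeeping is cross-family disjointness. Here the key observation is that every hyperedge $e_i^*$ of $H_{t+1}$ has a uniquely determined parent in $H_t$, obtained by replacing the at most one cloned vertex $v' \in e_i^*$ by its parent $v$; this is the image of $e_i^*$ under the homomorphism $f_t \colon H_{t+1} \to H_t$ that collapses each clone to its parent. Hence any motif in $\mathcal{F}_M$ uniquely determines $M$, so distinct originals give disjoint families. The main obstacle of the proof is ensuring that the case split above is genuinely exhaustive relative to the stated formula; once the disjointness condition $I_{v_j} \cap I_{v_k} = \emptyset$ is internalized, one sees that the five cases indexed by ``no multi-region'', $g$, $d$, $f$, $e$ partition exactly the full-replacement patterns counted by the formula, and the remainder is immediate.
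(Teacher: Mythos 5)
Your proposal is correct and follows essentially the same route as the paper: both proofs count the admissible sets $S$ of at most three vertices to clone subject to each hyperedge receiving at most one clone, verify that replacing each chosen vertex by its clone in every hyperedge containing it preserves the type and cardinality vector, and obtain uniqueness from the fact that each descendant hyperedge determines its parent. Your explicit case split over which multi-intersection region (if any) contributes the cloned vertex, and your use of the collapse homomorphism for cross-family disjointness, merely spell out details the paper leaves implicit.
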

\begin{proof}
For a motif in $H_t$ of type $i$ with cardinality vector  $(a,b,c,d,e,f,g)$ formed by the hyperedges $e_1, e_2, e_3$, we choose a set $S$ of up to three vertices contained in the motif to clone such that each hyperedge of the motif contains at most one cloned vertex.
Consider the motif in $H_{t+1}$ formed by the hyperedges $e'_1, e'_2, e'_3$, where $e'_i$ is the hyperedge obtained from $e_i$ by replacing each vertex that is also in $S$ with its clone and leaving other vertices unchanged.
This motif is of type $i$ and has cardinality vector  $(a,b,c,d,e,f,g)$. Each motif developed in this way is unique.
We must therefore find how many ways there are of choosing $S$, which is $$g + (c+1)d + (b+1)f + (a+1)e + (a+1)(b+1)(c+1),$$ and the proof follows.
\end{proof}

We have the following theorem.
\begin{theorem}\label{thm:m11}
If the initial hypergraph contains at least one hyperedge, then the number of motifs of type 11 in the ILTH model is $\Omega(k^{2t})$.
\end{theorem}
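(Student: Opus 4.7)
The plan is to apply Lemma~\ref{lm:fromCardVect} iteratively after constructing a seed motif with a carefully chosen cardinality vector. First I would identify the cardinality vector of motif type $11$ that maximizes the growth factor from Lemma~\ref{lm:fromCardVect}. For motif type $11$ we have $b = f = 0$ and $a, c, d, e, g \ge 1$; combined with $|e_1| = |e_2| = |e_3| = k$, this forces $c = d$, $e = a$, and $a + d + g = k$. Substituting these relations into the growth formula simplifies it to $k + 1 + a^2 + ad + d^2 + a + d$, and a short optimization over feasible positive integers shows it is maximized by $(a, d, g) = (k-2, 1, 1)$, giving the cardinality vector $(k-2, 0, 1, 1, k-2, 0, 1)$ and growth factor $k^2 - k + 3$.

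Next I would exhibit a seed motif with this cardinality vector in $H_{k-1}$. Fix a hyperedge $h = \{u_1, \ldots, u_k\}$ of $H_0$. For each $j \in \{2, 3, \ldots, k\}$, let $w_j$ denote the descendant of $u_j$ in $H_{k-1}$ produced by cloning $u_j$ at step $j-1$ of the ILTH process and at no other step. Let $e_1 = h$; let $e_2$ be obtained from $h$ by the lineage that at each step $s \in \{2, \ldots, k-1\}$ clones the descendant of $u_{s+1}$, producing $e_2 = \{u_1, u_2, w_3, w_4, \ldots, w_k\}$; and let $e_3$ be obtained by additionally cloning $u_2$ at step $1$, producing $e_3 = \{u_1, w_2, w_3, \ldots, w_k\}$. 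A direct check of the seven regions then verifies the cardinality vector: $u_1$ lies in all three hyperedges ($g = 1$); $u_2 \in e_1 \cap e_2$ and $w_2 \in e_3$ only ($d = c = 1$); for $j \ge 3$, $u_j$ lies only in $e_1$ while $w_j \in e_2 \cap e_3$ ($a = e = k-2$); and no vertex lies in $e_2 \setminus (e_1 \cup e_3)$ or $e_1 \cap e_3 \setminus e_2$ ($b = f = 0$).

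Finally I would iterate Lemma~\ref{lm:fromCardVect} from this seed. For $t \ge k - 1$, the number of motifs of type $11$ with the chosen cardinality vector in $H_t$ is at least $(k^2 - k + 3)^{t - (k-1)}$, and since $k$ is fixed, this lower bound is $\Omega(k^{2t})$ on the total number of motifs of type $11$ in $H_t$.

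The main obstacle is the intersection-region verification in the seed construction, which requires careful bookkeeping of the clone histories in each lineage and using that two descendants of vertices of $h$ coincide in $H_{k-1}$ exactly when they come from the same original vertex and the same set of cloning steps, so that $u_j$ and $w_j$ are distinct and lie in the specified hyperedges. Once the seed motif is in place, the iteration is immediate.
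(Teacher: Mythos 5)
Your proposal is correct and follows the paper's argument essentially verbatim: optimize the growth factor from Lemma~\ref{lm:fromCardVect} over cardinality vectors of type 11 (obtaining $k^2-k+3$), exhibit a seed motif realizing an optimal vector by time step roughly $k$, and iterate. Your seed carries the mirror-image cardinality vector $(k-2,0,1,1,k-2,0,1)$ of the paper's $(1,0,k-2,k-2,1,0,1)$ (same growth factor by symmetry) and is spelled out more explicitly, but the approach is the same.
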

\begin{proof}
A motif of type 11 has cardinality vector $(a,0,c,d,e,0,g)$, where $a+d+g = d+e+g = c+e+g=k$, which yields $a=e$ and $c=d$.
For each motif of type 11 in $H_t$, there will be
\begin{align*}
g + (c+1)d + (a+1)e + (a+1)(c+1) &= g +2a+2c + c^2 + a^2 +ac +1 \\
&= (k-g)^2-ac+2k-g+1,
\end{align*}
motifs of type 11 in $H_{t+1}$, which is maximized when $g=1$ and either $a=1$ or $c=1$ (as $a,c,g>0$) yielding a maximum value $k^2-k+3$.

Let $e_1$ be a hyperedge in $H_0$.
For some $u\in e_1$, there is a hyperedge $e_2 = e_1\cup \{\clone{u}\} \setminus \{u\}$ in $H_1$.
For some $v \in e_1\setminus \{u\}$, there is a hyperedge $e_3$ in $H_{k-2}$ with $e_3\cap e_2 \cap e_1 = \{u\}$ and $e_3\cap e_1 = \{u,v\}$.
These three hyperedges form a motif of type 11 in $H_{k-2}$ with cardinality vector $(1,0,k-2,k-2,1,0,1)$.
As such, by Lemma~\ref{lm:fromCardVect} there are at least $(k^2-k+3)^{t-k+2} = \Omega(k^{2t})$  motifs of type 11 in $H_{t}$ with cardinality vector $(1,0,k-2,k-2,1,0,1)$.
\end{proof}

We can also perform a similar analysis of the other motif types that grow rapidly.
\begin{theorem}\label{mmore}
If the initial hypergraph contains at least one hyperedge, then the number of each motif of types 2, 6, 12, 16, and 26 in the ILTH model is $\Omega(k^{3t})$.
\end{theorem}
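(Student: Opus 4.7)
My plan mirrors the proof of Theorem~\ref{thm:m11}. For each motif type $i \in \{2, 6, 12, 16, 26\}$, I would proceed in three steps: first, choose a specific cardinality vector $\tau_i$ (compatible with the binary signature of type $i$ and with $k$-uniformity) for which the multiplier
$$g + (c+1)d + (b+1)f + (a+1)e + (a+1)(b+1)(c+1)$$
from Lemma~\ref{lm:fromCardVect} is $\Theta(k^3)$; second, construct at least one motif of type $i$ with cardinality vector $\tau_i$ in some $H_{s_0}$ with $s_0 = O(k)$, using only descendants of the single guaranteed initial hyperedge; and third, iterate Lemma~\ref{lm:fromCardVect} from $H_{s_0}$ to $H_t$, which yields at least $M_i^{t-s_0} = \Omega(k^{3t})$ motifs of type $i$, where $M_i = \Theta(k^3)$ is the multiplier evaluated at $\tau_i$.

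For the first step, the multiplier is dominated by $(a+1)(b+1)(c+1)$, so I take $\tau_i$ with $a, b, c$ as large as possible and the intersection parameters $d, e, f, g$ equal to $0$ or $1$ (as required by the signature). This gives $\tau_2 = (k-1,k-1,k-1,0,0,0,1)$, $\tau_6 = (k-1,k-2,k-2,0,1,0,1)$, $\tau_{12} = (k-2,k-3,k-2,1,1,0,1)$, $\tau_{16} = (k-3,k-3,k-3,1,1,1,1)$, and $\tau_{26} = (k-2,k-2,k-2,1,1,1,0)$. The dominant term $(a+1)(b+1)(c+1)$ equals $k^3$, $k(k-1)^2$, $(k-1)^2(k-2)$, $(k-2)^3$, and $(k-1)^3$, respectively, each of which is $\Theta(k^3)$; the remaining summands only contribute lower-order terms.

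For the second step, write $e_1 = \{u, v_1,\ldots,v_{k-1}\}$ (or $\{v_1,\ldots,v_k\}$ when $g = 0$, as in motif 26). Each of $e_2, e_3$ will be realized as a descendant of $e_1$ in which each original vertex is either retained or replaced by a first- or second-generation clone. For motif 2, taking $e_2 = \{u, v_1',\ldots,v_{k-1}'\}$ and $e_3 = \{u, v_1'',\ldots,v_{k-1}''\}$ (where $v_i''$ denotes the clone of $v_i'$) realizes $\tau_2$ in $H_{2k-3}$. Motifs 6, 12, and 16 admit analogous constructions in which a small number of original or first-generation vertices persist in the prescribed pairwise intersections; in each case, one of $e_2, e_3$ is built of first-generation clones of the $v_i$ while the other uses second-generation clones for some coordinates and reuses $e_2$'s first-generation clones in the coordinates where $e_2 \cap e_3$ must be nonempty. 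Motif 26 is the most delicate, since it forbids a common vertex: one can still use three distinct meeting points, for instance $e_1 \cap e_2 = \{v_1\}$, $e_1 \cap e_3 = \{v_2\}$, and $e_2 \cap e_3 = \{v_3'\}$, with the remaining $k-2$ positions of each hyperedge filled by carefully chosen second-generation clones of the other $v_j$'s.

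The main obstacle is this construction step: in each case, one must verify that the chosen descendants yield exactly the intended intersection pattern (with no unintended coincidences) and that all three hyperedges coexist in some $H_{s_0}$ with $s_0 = O(k)$, which in turn requires tracking the sequence of single-vertex cloning operations at each step of the ILTH process. Once this is in hand, step three is immediate: iterating Lemma~\ref{lm:fromCardVect} from $H_{s_0}$ produces at least $M_i^{t-s_0}$ motifs of type $i$ with cardinality vector $\tau_i$ in $H_t$, and since $M_i = \Theta(k^3)$ with $s_0$ depending only on $k$, this gives $\Omega(k^{3t})$ motifs of type $i$.
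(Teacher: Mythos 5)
Your proposal is correct and follows essentially the same route as the paper's proof: exhibit, in some $H_{s_0}$ with $s_0=O(k)$, a motif of type $i$ attaining the maximum vertex count $\alpha_i=3k-O(1)$ (so that $a,b,c$ are all $k-O(1)$ and the multiplier of Lemma~\ref{lm:fromCardVect} is $\Theta(k^3)$), then iterate that lemma. You are in fact more explicit than the paper, which simply asserts that such motifs appear by time $3k$ and that the per-step growth factor is $\Omega(k^3)$.
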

\begin{proof}
It is straightforward to verify that $H_{3k}$ contains a motif of type $i$ containing $\alpha_i$ vertices, for $i \in \{2,6,11,12,16\}$. Suppose that the motif in question has cardinality vector $(a,b,c,d,e,f,g)$, and so $\alpha_i = a+b+c+d+e+f+g$.
As $\alpha_i = \Omega(k^3)$ for these values of $i$, by Lemma~\ref{lm:fromCardVect}, there are at least $(\alpha_i)$-times more of this motif type and cardinality vector in each iteration of the ILTH process.
Hence, there are at least $(\alpha_i)^{t-3k} = \Omega(k^{3t})$ of this motif type in $H_t$, and the result follows.
\end{proof}

Our analysis so far does not apply to motif types $13$, $14$, $15$, $24$, and $25,$ as each of these motif types will not be generated in the ILTH process on one hyperedge. However, if one of these motif types occurs within the starting hypergraph, then we will have exponential growth of these, as shown in the following theorem.
\begin{theorem}
If $H_0$ contains a motif of type $i \in \{13,14,15,24,25\}$ that contains $m$ vertices, then motif $i$ occurs at least $(m+1)^t$ times in $H_t$.
\end{theorem}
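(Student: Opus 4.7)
The plan is to leverage Lemma~\ref{lm:fromCardVect} inductively, showing that for each of the motif types $i\in\{13,14,15,24,25\}$ the per-step growth factor $g + (c+1)d + (b+1)f + (a+1)e + (a+1)(b+1)(c+1)$ is at least $m+1$, where $m = a+b+c+d+e+f+g$ is the number of vertices in the motif. The key observation I would use is that all five of these motif types share the common structural feature that $b = 0$ in their cardinality vectors (their binary codes $0001111$, $1001111$, $1011111$, $1001110$, $1011110$ all have $i_2 = 0$).

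First I would fix such a motif of type $i$ with $m$ vertices in $H_0$, let $(a,b,c,d,e,f,g)$ be its cardinality vector, and apply Lemma~\ref{lm:fromCardVect} to conclude that the number of motifs of type $i$ with this cardinality vector at least multiplies by the factor
\[
F \;=\; g + (c+1)d + (b+1)f + (a+1)e + (a+1)(b+1)(c+1)
\]
at each step of the ILTH process. Setting $b=0$, this expands to
\[
F \;=\; g + cd + d + f + ae + e + ac + a + c + 1 \;=\; (a+c+d+e+f+g) + (cd+ae+ac) + 1,
\]
and since $b = 0$ the first parenthesized sum equals $m$, giving $F = m + (cd+ae+ac) + 1 \geq m+1$.

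Iterating this $t$ times, starting from the single seed motif in $H_0$, yields at least $(m+1)^t$ motifs of type $i$ (all with the preserved cardinality vector) in $H_t$, which is the claimed bound.

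I do not anticipate a serious obstacle: the only place that requires care is verifying that each of the five relevant motif types indeed has $b=0$ (equivalently, that region $e_2\setminus(e_1\cup e_3)$ is empty for each), which is read off directly from the binary codes listed in the enumeration preceding Table~\ref{tab:alphas}. Once this is observed, the proof reduces to the single identity above together with the induction provided by Lemma~\ref{lm:fromCardVect}.
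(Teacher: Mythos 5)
Your proposal is correct and takes essentially the same route as the paper, whose proof simply cites Lemma~\ref{lm:fromCardVect}; you supply the arithmetic check that the per-step multiplier is at least $m+1$ and iterate from the seed motif in $H_0$. (As a minor remark, expanding $(a+1)(b+1)(c+1)$ in general gives $F = m + (ab+ac+bc+cd+ae+bf+abc) + 1 \ge m+1$ for \emph{every} cardinality vector, so your observation that $b=0$ for these five types, while true, is not actually needed.)
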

\begin{proof}
The proof follows by Lemma~\ref{lm:fromCardVect}.
\end{proof}

We contrast the motif counts for ILTH with comparable random hypergraphs. Let $G(n,k,p)$ be the random hypergraph where each possible $k$-set is included as a hyperedge with probability $p$. If we fix two vertices $u$ and $w$, then the expected number of hyperedges $e$ containing both $u$ and $w$ is $\binom{n-2}{k-2}p$.

We consider the $k$-uniform hypergraph with $n = n(t) = 2^t n(0) = \Theta\left(2^t\right)$ vertices and $$p = \frac{e(t)}{\binom{n(t)}{k}}= \Theta\left(2^{\left(\log_2(k+1) - k \right)t}\right).$$
We expect $\Theta(n^{\alpha_i} p^3)$ motifs of type $i$ with $\alpha_i$ vertices. To see this, give each vertex in a motif with $\alpha_i$ vertices a label between $1$ and $\alpha_i$, and define three $k$-sets with these labels $e_1$, $e_2$, and $e_3$ from the three hyperedges in the motif with these labels. We select $\alpha_i$ vertices from the set of $n$ vertices in the $k$-uniform random hypergraph, labeling the $i$th choice by the label $i$. There are $\frac{n!}{(n-\alpha_i)!} \sim n^{\alpha_i}$ possible ways to make these choices.
The sets of vertices $e_1$, $e_2$, and $e_3$ are hyperedges in the $k$-uniform random hypergraph with probability $p^3$. There is systematic double counting of occurrences of the motif but this only changes the expectation by a multiple of some function of $k$, which is a constant. The motifs of type $i$ with fewer than $\alpha_i$ vertices will occur $o(n^{\alpha_i} p^3)$ times, so the total number of motifs of type $i$ that have any number of vertices is  $\Theta(n^{\alpha_i} p^3)$.

Therefore, we expect
\begin{align*}
\Theta(n^{\alpha_i} p^3) = \Theta\left( 2^{\big(\alpha_i - 3\left(k - \log_2(k+1)\right)\big)t} \right)
\end{align*} many occurrences of motif $i$.
If $\alpha_i < 3\left(k - \log_2(k+1)\right),$ then the expected number of motifs of type $i$ will tend to $0$ exponentially fast, and if  $\alpha_i > 3\left(k - \log_2{k+1}\right),$ then the expected number of motifs of type $i$ grows exponentially. In particular, it will be useful to note that  if $\alpha_i \leq 2k-1$ and $k \ge 9$, then the expected number of motifs of type $i$ will tend to $0$ exponentially fast, and if $\alpha_i = 3k-c$ with $c \in \{2, 3,4, 5\}$, then the expected number of motifs of type $i$ grows exponentially fast.

As a consequence, we expect motifs $2$, $6$, $12$, $16$, and $26$ to occur an exponential number of times each in a random hypergraph. We expect that other motifs will rarely occur, with the probability that we see any diminishing when $k\geq 9$ and $t$ increases. As a consequence of Theorems~\ref{thm:m11} and \ref{mmore}, the growth rates of the motif types 2, 6, 11, 12, 16, and 26 is faster in ILTH than in a comparable random $k$-uniform hypergraph.

We finish the section with precise motif counts for ILTH with initial hypergraph a single hyperedge. We ran the ILTH model on a computer, starting with a single hyperedge of cardinality $k$, for $3 \leq k \leq 6$ and $1 \leq t \leq 10-k$.

See Tables~2 to 5 below for the motif counts of these ILTH hypergraphs.

\begin{center}
\begin{table}[ht!]
\begin{tabular}{|l|l|l|l|l|l|l|l|l|l|l|l|l|l|l|}
\hline
t&2&6&11&26\\ \hline
1&&&3&1\\
2&45&126&75&45\\
3&3447&4770&1083&1141\\
4&161451&115146&12675&22365\\
5&5981355&2301930&133563&382981\\
6&195870195&41818266&1326675&6071085\\
7&5993456427&720709290&12718443&91888021\\
\hline
\end{tabular}
\caption{The number of motifs generated by the ILTH model starting with a hyperedge of cardinality $3$.}
\label{tab:ILTHmotifCounts3}
\end{table}

\medskip

\begin{table}[ht!]
\begin{tabular}{|l|l|l|l|l|l|l|l|l|l|l|l|l|l|l|}
\hline
t&2&6&11&12&16&26\\ \hline
1&&&6&&4&\\
2&90&504&474&504&188&276\\
3&16660&75168&14010&42192&5116&34248\\
4&2651330&6088680&305682&1920888&107712&2341332\\
5&305991860&369517680&5764506&67434480&2026684&122766120\\
6&28267339810&19173430584&100158594&2066592024&34911788&1285323380\\
\hline
\end{tabular}
\caption{The number of motifs generated by the ILTH model starting with a hyperedge of cardinality $4$.}
\label{tab:ILTHmotifCounts4}
\end{table}

\begin{table}[ht!]
\begin{tabular}{|l|l|l|l|l|l|l|l|l|l|l|l|l|l|l|}
\hline
t&2&6&11&12&16&26\\ \hline
1&&&10&&10&\\
2&150&1110&1490&2100&1870&420\\
3&40210&356670&82030&540720&189610&234360\\
4&13613610&77687610&3114650&71894820&12725950&50062740\\
5&4067088850&12719703750&97894510&6831291600&680649610&7078307400\\
\hline
\end{tabular}
\caption{The number of motifs generated by the ILTH model starting with a hyperedge of cardinality $5$.}
\label{tab:ILTHmotifCounts5}
\end{table}

\begin{table}[ht!]
\begin{tabular}{|l|l|l|l|l|l|l|l|l|l|l|l|l|l|l|}
\hline
t&2&6&11&12&16&26\\ \hline
0&&&&&&\\
1&&&15&&20&\\
2&229&2070&3285&5040&7680&120\\
3&79096&994680&301515&2610180&1983740&576720\\
4&388621215&409931190&18710325&815537880&346117200&370671840\\
\hline
\end{tabular}
\caption{The number of motifs generated by the ILTH model starting with a hyperedge of cardinality $6$.}
\label{tab:ILTHmotifCounts6}
\end{table}
\end{center}

\newpage

\section{Hypergraph clustering coefficients}\label{secclus}

The small-world property in complex networks demands low average distance and high clustering coefficients, relative to random graphs with the same expected average degree; see \cite{bbook} for a discussion. An analogous definition holds for small-world hypergraphs, comparing their properties to a random hypergraph $G(n,k,p)$ with the same order $n$ and $p$ chosen so that they have the same expected average degree. As we demonstrated in Section~\ref{2sec}, ILTH hypergraphs have constant average distance. Hence, a natural next step in our investigation is to consider clustering coefficients of ILTH hypergraphs.

There are a variety of hypergraph clustering coefficients we may consider; see \cite{GallagherGoldberg2013} for nine distinct coefficients. We focus on a clustering coefficient introduced in \cite{estrada}, along with a new one that is a variant of the one studied in \cite{zn}. We discuss these clustering coefficients by considering graphs. For a graph $G,$ the global clustering coefficient is
$$
C(G) = \frac{6 \times (\text{number of triangles in $G$})}{\text{number of paths of length two in $G$}}.
$$
Note that $C(G)$ is a rational number in the interval $[0,1]$.

There are several different ways to generalize the definition of clustering coefficient to hypergraphs. We discuss three of these in the context of the ILTH model.

We define a \emph{path} of length two in a hypergraph to be a 5-tuple $(u,e_1,v,e_2,w)$ where $u,v,w$ are distinct vertices, $e_1,e_2$ are distinct hyperedges, and $u,v \in e_1$, $v,w \in e_2$.
Similarly, we define a \emph{hypertriangle} to be a 6-tuple $ (u,e_1,v,e_2,w,e_3)$ where $u,v,w$ are distinct vertices, $e_1,e_2,e_3$ are distinct hyperedges, and $u,v \in e_1$, $v,w \in e_2$, $w,u \in e_3$.
We have the following generalization of the clustering coefficient to hypergraphs, appearing first in \cite{estrada}:
$$
\HC_1(H) = \frac{ 6 \times \text{(number of hypertriangles in $H$)}}{\text{number of \hyperpaths  of length two in $H$}}.
$$
Note that $\HC_1(H) = C(H)$ in the case that $H$ is a graph. However, for general hypergraphs $H$, the values of $\HC_1(H)$ need no longer be in the interval $[0,1]$. For example, the complete $k$-uniform hypergraph on $n$ vertices has $\HC_1(GK^{(k)}_n) = \binom{n-2}{k-2}$. The reason for this difference with the graph case is because a given path of length two $(u,e_1,v,e_2,w)$ can be extended to a hypertriangle in many different ways. The hyperedge $e_3$ can be any hyperedge so long as it includes $u$ and $w$. The clustering coefficient $\HC_1$ counts the average number of hypertriangles that are extensions of a path of length two.

We prove the following theorem on $\HC_1$ in Subsection~\ref{sec:HC_1}.
\begin{theorem}\label{thm:HC_1}
For a nonnegative integer $t$, we have that
$$\HC_1(H_t)  = \Theta\left( \left(\frac{(k-1)^3 + 3(k-1)}{k^2 + 1}\right)^t \right).$$
\end{theorem}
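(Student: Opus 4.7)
The plan is to derive asymptotic recurrences for the hypertriangle count $\tau(t)$ and the length-two path count $p(t)$ in $H_t$, and then compute $\HC_1(H_t)=6\tau(t)/p(t)$ as the ratio. The key tool is a lift-count: each hyperedge $e$ of $H_t$ has exactly $k+1$ descendants in $H_{t+1}$ (namely $e$ and $e-x+x'$ for $x\in e$), and each descendant contains at most one clone. Hence for distinct $a,b\in e$, the number of descendants of $e$ containing prescribed images $a^{*},b^{*}\in\{a,a',b,b'\}$ is
$$
N(\mathrm{orig},\mathrm{orig})=k-1,\quad N(\mathrm{orig},\mathrm{clone})=N(\mathrm{clone},\mathrm{orig})=1,\quad N(\mathrm{clone},\mathrm{clone})=0,
$$
because the descendants containing both $a$ and $b$ are $e$ together with $e-x+x'$ for $x\in e\setminus\{a,b\}$, while a clone $a'$ only sits in the unique descendant $e-a+a'$.

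First, I would count lifts from $H_t$ to $H_{t+1}$: for a fixed path $(u,e_1,v,e_2,w)$ or hypertriangle in $H_t$, independently choose for each of its vertices whether the lifted vertex is the original or the clone, then pick valid descendant hyperedges. Summing $N(\sigma_u,\sigma_v)N(\sigma_v,\sigma_w)$ over $(\sigma_u,\sigma_v,\sigma_w)\in\{\mathrm{orig},\mathrm{clone}\}^{3}$ factors through $\sigma_v$ as $\sum_{\sigma_v}\bigl(\sum_{\sigma_u}N(\sigma_u,\sigma_v)\bigr)^{2}=k^{2}+1$. The analogous sum $\sum N(\sigma_u,\sigma_v)N(\sigma_v,\sigma_w)N(\sigma_w,\sigma_u)$ for hypertriangles yields $(k-1)^{3}+3(k-1)$, since only the all-original case and the three one-clone cases are nonzero. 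This gives recurrences of the form $p(t+1)=(k^{2}+1)p(t)+\mathrm{err}_p(t)$ and $\tau(t+1)=((k-1)^{3}+3(k-1))\tau(t)+\mathrm{err}_\tau(t)$.

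Next, I would bound the error terms, which count structures in $H_{t+1}$ whose projection to $H_t$ is degenerate; vertex-coincidences cannot happen (two vertices of a common hyperedge have distinct parents), so only edge-coincidences contribute. For paths, the error is driven by $e(t)$ and by the number $Q(t)$ of closed 2-paths $(u,e_1,v,e_2,u)$ with $e_1\ne e_2$; a parallel lift-count yields $Q(t+1)=((k-1)^{2}+2)Q(t)+k(k-1)^{2}(k-2)e(t)$, so $Q(t)=O((k^{2}-2k+3)^{t})$, below $k^{2}+1$. For hypertriangles, the pair-projection error is controlled by $M(t)=\sum_{e\ne f}|e\cap f|$; a vertex-by-vertex count of descendant intersections gives $M(t+1)=(k^{2}+1)M(t)+k^{2}(k-1)e(t)$, so $M(t)=\Theta((k^{2}+1)^{t})$, strictly below the main rate $(k-1)^{3}+3(k-1)$ for $k\ge 3$. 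Solving the resulting upper-triangular systems yields $p(t)=\Theta((k^{2}+1)^{t})$ and $\tau(t)=\Theta(((k-1)^{3}+3(k-1))^{t})$, with positive leading constants supplied by the inhomogeneous $e(t)$ terms whenever $H_0$ contains a hyperedge, and dividing gives the theorem.

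The hard part will be the pair-projection bound for $\tau$ when $k$ is small: a coarse bound by the total pair count $e(t)^{2}$ would give growth $(k+1)^{2t}$, e.g.\ $16^{t}$ for $k=3$, which already exceeds the main rate $14^{t}$. The fix is that each pair-projected hypertriangle forces the two involved edges of $H_t$ to share a vertex and the per-pair contribution is polynomial in $|e\cap f|$ and $k$, reducing the relevant quantity to moments $\sum|e\cap f|^{j}$, which grow only at rate $k^{2}+1$. The boundary case $k=2$ is automatic because no hyperedge can contain both a vertex and its clone, so pair-projected triangles do not exist and the recurrence collapses to $\tau(t)=4^{t}\tau(0)$ and $p(t)=\Theta(5^{t})$ by direct inspection.
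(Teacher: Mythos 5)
Your proposal is correct and follows essentially the same route as the paper: the paper likewise obtains the factors $k^2+1$ and $(k-1)^3+3(k-1)$ by counting descendants of length-two paths and hypertriangles (via auxiliary sets $P'(t)$ and $T'(t)$ that relax distinctness of the edges and satisfy exact multiplicative recurrences), and then controls the degenerate configurations through $e(t)$ and the intersection counts $\sum_{i} i(i-1)|P_i(t)| = O\left((k^2+1)^t\right)$, which are exactly the quantities your $Q(t)$ and $M(t)$ track. The only difference is bookkeeping: the paper sandwiches the true counts between the auxiliary ones at each fixed $t$, whereas you run inhomogeneous recurrences on the true counts directly; both yield the same leading rates and the same positivity of the leading constants.
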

We can show that $H_t$ has a higher value of $\HC_1$ than the random $k$-uniform hypergraph with the same number of vertices and the same expected average degree. See the discussion at the end of Subsection~\ref{sec:HC_1}.

There are other ways to express the clustering coefficient on graphs that lead to different generalisations to hypergraphs. One such equivalent definition is that $C$ is the probability that given a path of length two, the end vertices are adjacent:
$$
C(G) = \mathbb{P}\left( uv \text{ is an edge} : (u,e_1,w,e_2,v) \text{ a path of length two}
\right).
$$
\noindent We say two vertices $u,v$ in a hypergraph are \emph{adjacent}, written $u \sim v$, if there is some hyperedge $e$ containing both. There is then a natural way to generalize this definition of $C$ to hypergraphs, which we think we are, surprisingly, the first to propose.
\begin{align*}
\HC_2(H) &= \mathbb{P}\left( u \sim v  : (u,e_1,w,e_2,v) \text{ a \hyperpath of length two in $H$}
\right)\\
&= \frac{\text{number of \hyperpaths  $(u,e_1,w,e_2,v)$, where $u \sim v$} }{\text{number of \hyperpaths  of length two} }.
\end{align*}
Note that since $\HC_2$ is a probability, this clustering coefficient is bounded between $0$ and $1$. Further, $\HC_2$ matches the clustering coefficient $C$ on graphs.

A different generalization of the clustering coefficient to hypergraphs, due to \cite{zn}, also retains the property that the clustering coefficient is between 0 and 1, and is closely related to $\HC_2$. Let $\mathcal{I}$ be the set of pairs of intersecting edges in $H$.
For a $(e,f) \in \mathcal{I}$, define $$A(e,f) = |\{u \in e-f:  \text{ for some } w \in f - e  \text{ with } u \sim w \}|.$$ For $e_1,e_2 \in \mathcal{I}$ define $${\rm EO}(e_1,e_2) = \frac{A(e_1,e_2) + A(e_2,e_1)
}{|e_1 - e_2| + |e_2 - e_1|}.$$
The extra overlap attempts to capture the number of connections between vertices $u \in e_1 - e_2$ and $w \in e_2 -e_1$.
It is evident that $0 \le {\rm EO}(e_1,e_2) \le 1$. The following clustering coefficient from \cite{zn} is the average extra overlap over all intersecting pairs of edges:
\begin{equation*}
\HC_3(H) = \frac{1}{|\mathcal{I}|} \sum_{(e_i,e_j) \in \mathcal{I}} \rm{EO}(e_i,e_j).
\end{equation*}

The goals of the authors in \cite{zn} were to define a clustering coefficient on hypergraphs that i) took values in $[0,1]$, ii) matches the normal clustering coefficient when applied to graphs, and iii) reflects the extent of connectivity among neighbors of $v$ due to hyperedges other than ones connecting $v$ with those neighbors. These three goals are satisfied by $\HC_3$, but they are also all satisfied by $\HC_2$, which we believe to be a more natural definition given that it can be simply expressed as a probability without recourse to the notion of extra overlap. For these reasons, we focus on the new clustering coefficient parameter $\HC_2.$

We prove the following theorem on $\HC_2$ in Subsection~\ref{sechc2}.
\begin{theorem}\label{thm:HC_2}
For a nonnegative integer $t,$ we have that
\begin{equation*}
\HC_2(H_t)  =  \Theta \left( \left( \frac{k^2}{k^2 + 1} \right)^t \right).
\end{equation*}
\end{theorem}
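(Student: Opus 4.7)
Let $P(t)$ denote the number of paths of length two in $H_t$, and $Q(t)$ the number of such paths $(u,e_1,w,e_2,v)$ for which $u \sim v$, so that $\HC_2(H_t) = Q(t)/P(t)$. The plan is to derive and solve first-order linear recurrences of the form
\[P(t+1) = (k^2+1)P(t) + \varepsilon_P(t), \qquad Q(t+1) = k^2 Q(t) + \varepsilon_Q(t),\]
with $\varepsilon_P(t) = o((k^2+1)^t)$ and $\varepsilon_Q(t) = o(k^{2t})$, and then take the ratio.

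To obtain the main coefficients, I would lift each path $(u,e_1,w,e_2,v)$ in $H_t$ to its descendants $(u^*,e_1^*,w^*,e_2^*,v^*)$ in $H_{t+1}$, splitting on whether $w^* = w$ or $w^* = w'$. When $w^* = w$, the pairs $(u^*,e_1^*)$ and $(v^*,e_2^*)$ can each be chosen in $k$ independent ways---either $u$ in one of the $k-1$ descendants of $e_1$ containing both $u$ and $w$, or $u'$ in the unique descendant $e_1 - u + u'$---yielding $k^2$ paths; when $w^* = w'$, both $e_1^*$ and $e_2^*$ are forced, giving a single path. Hence each parent path produces $k^2 + 1$ descendants. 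To track adjacency, note that if $u \sim v$ via some hyperedge $g$ in $H_t$, then $g$, $g-v+v'$, and $g-u+u'$ witness $u \sim v$, $u \sim v'$, and $u' \sim v$ respectively in $H_{t+1}$, while no hyperedge of $H_{t+1}$ contains two clones, so $u' \not\sim v'$. Thus exactly $k^2$ of the $k^2 + 1$ descendants have adjacent endpoints (the one exception being $(u^*,v^*) = (u',v')$). Conversely, a short case analysis over the three kinds of hyperedges in $H_{t+1}$ shows that $u \not\sim v$ in $H_t$ implies $u^* \not\sim v^*$ for every descendant pair, so such parent paths contribute nothing to $Q(t+1)$.

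Next, I would account for paths $(x_1,f_1,x_2,f_2,x_3)$ in $H_{t+1}$ whose parent tuple $(u_1,e_1,u_2,e_2,u_3)$ fails to be a valid path in $H_t$. Since $u_1 \neq u_2$ and $u_2 \neq u_3$ always hold, there are two failure modes: Type~III, where $e_1 = e_2$; and Type~II, where $e_1 \neq e_2$ but $u_1 = u_3$, so that $\{x_1,x_3\} = \{u_1,u_1'\}$ and $x_1 \not\sim x_3$. Type~III contributes a $k$-dependent constant per hyperedge of $H_t$, for a total of $O(e(t)) = O((k+1)^t)$; Type~II contributes $O(\Sigma(t))$ to $P(t+1)$ and $0$ to $Q(t+1)$, where $\Sigma(t) = \sum_{e_1 \neq e_2 \in E(H_t)} |e_1 \cap e_2|(|e_1 \cap e_2|-1)$. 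I would prove the identity
\[\sum_{(y_1,y_2)} |f_1 \cap f_2|(|f_1 \cap f_2|-1) = \big((k-1)^2 + 2\big)\,|e_1 \cap e_2|(|e_1 \cap e_2|-1),\]
with the sum over $(y_1,y_2) \in (\{\emptyset\} \cup e_1) \times (\{\emptyset\} \cup e_2)$, to obtain an auxiliary recurrence $\Sigma(t+1) = (k^2 - 2k + 3)\Sigma(t) + O((k+1)^t)$. Since $k^2 - 2k + 3 < k^2 + 1$ for $k \geq 2$, this gives $\Sigma(t) = o((k^2+1)^t)$. Solving the two main linear recurrences then yields $P(t) = \Theta((k^2+1)^t)$ and $Q(t) = \Theta(k^{2t})$, and the ratio delivers the theorem.

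The hardest step will be the bookkeeping for the Type~II correction: the side recurrence on $\Sigma(t)$ is what ensures these paths are genuinely lower-order, and the algebraic identity above, while elementary, requires a careful case split on how the clone operations $y_1, y_2$ interact with the intersection $e_1 \cap e_2$.
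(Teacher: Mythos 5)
Your proposal is correct, and it reaches the theorem by a genuinely different organization of the count than the paper. The paper's device (Lemma~\ref{lem:A}) is to enlarge the numerator set to the family $A(t)$ of $5$-tuples that extend to a \emph{possibly degenerate} hypertriangle, deliberately allowing $e_1=e_2$, so that the descendant map gives an \emph{exact} multiplicative recurrence $|A(t)|=k^2|A(t-1)|$ with no error term; the true count $\Lambda(t)$ of paths with adjacent endpoints is then recovered by subtracting the exactly $k(k-1)(k-2)e(t)$ degenerate tuples, and the denominator is Lemma~\ref{lem:paths}. You instead keep the honest counts $P(t)$ and $Q(t)$ and push the tuples whose parent fails to be a path into error terms; the main coefficients $k^2+1$ and $k^2$ agree with the paper's (your observation that the unique excluded descendant is the one with both endpoints cloned is the same arithmetic as the paper's case split $1+2(k-1)+(k-1)^2=k^2$). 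The price of your route is the Type~II correction, which forces the auxiliary recurrence on $\Sigma(t)=\sum_{e_1\ne e_2}|e_1\cap e_2|(|e_1\cap e_2|-1)$; I verified your identity $\sum_{(y_1,y_2)}|f_1\cap f_2|(|f_1\cap f_2|-1)=(k^2-2k+3)\,|e_1\cap e_2|(|e_1\cap e_2|-1)$, and since $k^2-2k+3<k^2+1$ for $k\ge 2$ the correction is indeed lower order (for $k=2$ the inhomogeneous term grows at the same rate $k+1=3$, giving $\Sigma(t)=O(t\,3^t)$, still $o(5^t)$). What your approach buys is a sharp growth rate $\Theta\left((k^2-2k+3)^t\right)$ for the sum of squared codegrees, where the paper only ever uses the crude bound $(k-2)(k^2+1)^t$ (in the proof of Lemma~\ref{lem:triangles}); what it costs is the extra bookkeeping the paper avoids by closing $A(t)$ under degeneracies. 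One small point to make explicit when writing it up: the lower bounds $P(t)=\Omega((k^2+1)^t)$ and $Q(t)=\Omega(k^{2t})$ require $P(t_0),Q(t_0)>0$ for some fixed $t_0$ (for instance $Q(0)=0$ when $H_0$ is a single hyperedge), so the iteration should be started at such a $t_0$.
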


We show that $H_t$ has a lower value of $\HC_2$ than the random $k$-uniform hypergraph with the same number of vertices and the same expected average degree, and so by this measure, it has less clustering. This is in contrast to the clustering coefficient $\HC_1,$ and we include a discussion of this phenomenon at the end of the section. We introduce a modified version of ILTH where
clones and their parents are in certain hyperedges. For the modified ILTH model, $\HC_2$ has higher values than in random hypergraphs.

The following lemma will prove useful in our study of hypergraph clustering coefficients.
\begin{lemma} \label{lem:inclusion}
Suppose that $v \in V(H_{t-1})$ and $e \in E(H_{t-1})$ with $v \not\in e$. Let $v^* \in V(H_t)$ be a descendant of $v$ and $e^* \in E(H_t)$ be a descendant of $e$. We then have that $v^* \not\in e^*$.
\end{lemma}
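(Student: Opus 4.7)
The plan is to proceed by a straightforward case analysis on which descendants $v^*$ and $e^*$ are being considered. By the definition of the ILTH process, a descendant of $v$ is either $v$ itself or its clone $v'$ (a new vertex in $V(H_t) \setminus V(H_{t-1})$), and a descendant of $e$ is either $e$ itself or a hyperedge of the form $e - x + x'$ for some $x \in e$. So there are essentially four cases to dispatch.

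The key observation driving every case is that the only vertices appearing in any descendant $e^*$ of $e$ are (i) vertices of $e$ itself and (ii) clones $x'$ of vertices $x \in e$. First I would handle $v^* = v$: if $v \in e^*$, then either $v \in e$, contradicting the hypothesis, or $v = x'$ for some $x \in e$, which is impossible since $v \in V(H_{t-1})$ while $x'$ is a brand-new vertex. Next I would handle $v^* = v'$: if $v' \in e^*$, then $v'$ is a new vertex and therefore cannot lie in $e \subseteq V(H_{t-1})$, so we must have $e^* = e - x + x'$ with $v' = x'$; but the cloning map is injective on the parents, so $v' = x'$ forces $v = x \in e$, again contradicting the hypothesis.

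Each subcase reduces immediately to the contradiction $v \in e$. The whole lemma is therefore a short bookkeeping exercise once one records the two elementary facts that (a) clones are new vertices not appearing in any hyperedge of $H_{t-1}$ and (b) the vertex set of any $e^*$ is contained in $e \cup \{x' : x \in e\}$. There is no genuine obstacle here; the main thing is simply to be careful to separate the two roles of $v'$ (as a potential new element of $e^*$ coming only from cloning a vertex already in $e$) from its role as a descendant of $v$, and to invoke the hypothesis $v \notin e$ at the right moment in each case.
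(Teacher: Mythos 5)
Your proposal is correct and takes essentially the same approach as the paper, which likewise observes that every descendant of $e$ consists only of vertices of $e$ and their clones, so neither $v$ nor $v'$ can appear once $v \notin e$. Your version merely spells out the case analysis that the paper compresses into a single sentence.
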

\begin{proof}
Take some $v \in V(H_{t-1})$ and $e \in E(H_{t-1})$ with $v \not\in e$.  The descendants of $e$ are $e$ and $e - x + x'$ for each $x \in e$. Since $v \not \in e$, it is evident that $v$ and $v'$ are not contained in any of the descendants of $e$.
\end{proof}

Lemma~\ref{lem:inclusion} is more useful for our purpose in its contrapositive form.
\begin{lemma} \label{lem:inclusion1}
Suppose that $v^* \in V(H_t)$ and $e^* \in E(H_t)$ with $v^* \in e^*$. If $v \in V(H_{t-1})$ and $e \in E(H_{t-1})$ are their respective predecessors, then $v \in e$.
\end{lemma}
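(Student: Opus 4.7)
The plan is essentially a one-line argument: this lemma is just the contrapositive of Lemma~\ref{lem:inclusion}, so I would invoke it directly rather than redo any computation. To be concrete: suppose toward a contradiction that $v \notin e$, where $v \in V(H_{t-1})$ and $e \in E(H_{t-1})$ are the predecessors of $v^*$ and $e^*$ respectively. Then Lemma~\ref{lem:inclusion} applies to the pair $(v,e)$ and yields $v^* \notin e^*$, contradicting the hypothesis $v^* \in e^*$.

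The only mild subtlety to flag, and the step I would make explicit before invoking Lemma~\ref{lem:inclusion}, is that the notion of ``predecessor'' is well-defined: every vertex in $V(H_t)$ is either a vertex of $H_{t-1}$ or the clone of a unique vertex of $H_{t-1}$, so $v$ is uniquely determined by $v^*$; similarly, every hyperedge in $E(H_t)$ is either a hyperedge of $H_{t-1}$ or of the form $e - x + x'$ for a unique $e \in E(H_{t-1})$ and a unique $x \in e$, so $e$ is uniquely determined by $e^*$. With these predecessors fixed, one has $v^* \in \{v, v'\}$ and $e^* \in \{e\} \cup \{e-x+x' : x \in e\}$, which is precisely the setup in which Lemma~\ref{lem:inclusion} rules out membership when $v \notin e$.

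There is no real obstacle here; the content of the lemma is carried entirely by Lemma~\ref{lem:inclusion}. The only reason to state it separately is that in the upcoming clustering analysis one repeatedly wants to go from incidence in $H_t$ back to incidence in $H_{t-1}$, so having the contrapositive available by name avoids clutter in those arguments.
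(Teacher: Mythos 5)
Your proof is correct and matches the paper exactly: the paper states this lemma as the contrapositive of Lemma~\ref{lem:inclusion} with no further argument, which is precisely what you do. Your added remark that predecessors are well-defined is a reasonable bit of extra care but not needed beyond what the paper supplies.
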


\subsection{The clustering coefficient $\HC_1$}\label{sec:HC_1}

This subsection is devoted to proving Theorem~\ref{thm:HC_1}. To that end, we prove two combinatorial lemmas finding the asymptotic order of the number of paths of length two and the number of hypertriangles in $H_t$, respectively.

\begin{lemma} \label{lem:paths}
The number of paths of length two in $H_t$ is $\Theta\left( \left( k^2 + 1\right)^t \right)$.
\end{lemma}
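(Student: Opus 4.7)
The plan is to derive a linear recurrence of the form
\[
P(t+1) = (k^2+1)\,P(t) + (\text{lower-order terms})
\]
for the number of paths of length two in $H_t$, and then to read off the $\Theta$-bound from it.

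To set up the recurrence, I would classify each path $(u^*,e_1^*,v^*,e_2^*,w^*)$ in $H_{t+1}$ by the tuple of predecessors $(u,v,w,e_1,e_2)$ in $H_t$ obtained by replacing each clone by its parent and each descendant hyperedge by its predecessor. By Lemma~\ref{lem:inclusion1} the containment conditions $u,v\in e_1$ and $v,w\in e_2$ are automatic. A key observation is that because a descendant hyperedge in $H_{t+1}$ contains at most one clone, two descendants of a common vertex cannot both lie in one descendant hyperedge; this forces $u\neq v$ and $v\neq w$ in $H_t$. On the other hand, $u=w$ remains possible (since $u^*$ and $w^*$ live in different descendant hyperedges) and $e_1=e_2$ is possible (a hyperedge in $H_t$ has $k+1$ distinct descendants). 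So the predecessor tuples split into four cases according to whether $u=w$ and whether $e_1=e_2$.

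In the main case, where $u,v,w$ are distinct and $e_1\neq e_2$, the predecessor tuple is itself a path in $H_t$, and I would show by a short enumeration that each such path contributes exactly $k^2+1$ descendant paths in $H_{t+1}$. Conditioning on $v^*$: if $v^*=v$, then over the two choices of $u^*$ the number of valid $e_1^*$ sums to $k$ (namely $k-1$ options when $u^*=u$ and $1$ when $u^*=u'$), and independently the number of valid $e_2^*$ sums to $k$, contributing $k^2$; if $v^*=v'$, then $e_1^*$ must equal $e_1-v+v'$, which forces $u^*=u$, and similarly $e_2^*=e_2-v+v'$ with $w^*=w$, contributing $1$. This yields the dominant term $(k^2+1)P(t)$. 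The remaining three cases produce correction terms: the two cases with $e_1=e_2$ contribute constants times $e(t)=(k+1)^t e(0)$, and the case $u=w$, $e_1\neq e_2$ introduces an auxiliary quantity $Q(t)$ counting ordered tuples $(u,v,e_1,e_2)$ with $u\neq v$ and $u,v$ in two distinct hyperedges, which satisfies its own recurrence of the form $Q(t+1)=(k^2-2k+3)Q(t)+(k-1)(k-2)R(t)$ with $R(t)=k(k-1)e(t)$. Since $k^2+1$ strictly exceeds both $k^2-2k+3$ and $k+1$ for all $k\ge 2$, every correction term is $o((k^2+1)^t)$.

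The upper bound $P(t)=O((k^2+1)^t)$ then follows by iterating the recurrence, while the matching lower bound follows from the fact that every correction term is nonnegative, so $P(t+1)\ge (k^2+1)P(t)$; this gives $P(t)\ge (k^2+1)^{t-1}P(1)$, and $P(1)>0$ whenever $H_0$ contains a hyperedge. The main obstacle will be the careful bookkeeping of the four-case enumeration, especially when $e_1=e_2$ in $H_t$ and one must select two distinct descendants of a single hyperedge as $e_1^*$ and $e_2^*$ while respecting the containment constraints and the requirement $e_1^*\neq e_2^*$.
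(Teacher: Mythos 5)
Your proof is correct, and the central computation---that a nondegenerate path of length two in $H_t$ has exactly $k^2+1$ descendant paths in $H_{t+1}$, verified by conditioning on whether $v^*$ is a clone---is the same one the paper relies on. Where you differ is in how the degenerate configurations are handled. The paper sidesteps your four-case analysis entirely: it counts the simpler auxiliary set $P'(t)$ of triples $(e_1,v,e_2)$ with $v\in e_1\cap e_2$ (endpoints dropped, $e_1=e_2$ allowed), for which the descendant count is exactly $k^2+1$ with no exceptional cases, obtains the closed form $|P'(t)|=(k^2+1)^t|P'(0)|$, and then sandwiches the true path count using the intersection-size decomposition $|P'(t)|=\sum_{i=1}^{k}i|P_i(t)|$ versus $|P(t)|=\sum_{i=1}^{k-1}i(k-i)^2|P_i(t)|$, giving $|P'(t)|-ke(t)\le |P(t)|\le (k-1)^2|P'(t)|$. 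You instead recurse on $|P(t)|$ directly and push the degeneracies ($u=w$, or $e_1=e_2$) into correction terms, including a secondary recurrence for $Q(t)$ with branching factor $k^2-2k+3<k^2+1$; I checked your coefficients and they are right. Your route costs more bookkeeping but buys the exact recurrence for the path count itself and, in particular, the clean monotone lower bound $P(t+1)\ge (k^2+1)P(t)$ (needing only $P(1)>0$); the paper's route is shorter because the auxiliary quantity closes exactly, at the price of settling for a constant-factor sandwich. One small point worth making explicit in your write-up: distinct hyperedges of $H_t$ have disjoint sets of descendants in $H_{t+1}$ (a descendant containing a clone $x'$ determines $x$ and hence its predecessor), which is what guarantees $e_1^*\ne e_2^*$ automatically in your main case and makes the $k\times k$ count multiplicative.
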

\begin{proof}
Let $P'(t) = \{ (e_1,v,e_2) : v \in V(H_t), e_1,e_2 \in E(H_t), v \in e_1 \cap e_2 \}$. Note that, while closely related, this is not the same as the set of paths of length two as we do not include endpoints. We include the degenerate case where $e_1 = e_2$. We find an exact value for $|P'(t)|$ in terms of $t$ and $|P'(0)|$, which will enable us to bound the number of paths of length two.

Fix some $(e_1,v,e_2) \in P'(t-1)$. We wish to count the number of descendants $(e_1^*,v^*,e_2^*)$ this has in $P'(t)$. If $v^* = v',$ then for $v^* \in e_1^* \cap e_2^*$ we must have $e_1^* = e_1 - v + v'$ and $e_2^* = e_2 - v + v'$, so there is one descendant  $(e_1^*,v^*,e_2^*)$ in $P'(t)$, where $v^* = v'$. If $v^* = v$, then for $v \in e_1^* \cap e_2^*$ we cannot have $e_1^* \ne e_1 - v + v'$ and $e_2^* \ne e_2 - v + v'$. All of the $k$ other descendants of $e_1$ and the $k$ other descendants of $e_2$ contain $v$ so there are $k^2$ descendants $(e_1^*,v^*,e_2^*)$ in $P'(t)$, where $v^* = v$.

In total, each $(e_1,v,e_2) \in P'(t-1)$ has $k^2 + 1$ descendants $(e_1^*,v^*,e_2^*)$ in $P'(t)$, giving $$|P'(t)| \ge (k^2+1)|P'(t-1)|.$$

Next, suppose we have some  $(e_1^*,v^*,e_2^*) \in P'(t)$, so in particular, $v^* \in e_1^* \cap e_2^*$. Consider their respective predecessors $e_1,v$, and $e_2$ in $H_{t-1}$. Lemma~\ref{lem:inclusion1} provides that $v \in e_1$ and $v \in e_2$, so $(e_1,v,e_2) \in P'(t-1)$. Hence, every triple in $P'(t)$ is a descendant of a triple in $P'(t-1)$, and in particular, $|P'(t)| = (k^2+1)|P'(t-1)|$. Iterating this process, we derive that $|P'(t)| = \left(k^2 + 1\right)^t |P'(0)|$.

Now, let $P(t)$ be the set of paths of length two in $H_t$. Recall that a path of length two is $(u,e_1,v,e_2,w)$ where $u,v,w \in V(H_t)$ are distinct, $e_1,e_2 \in E(H_t)$ are distinct, and $u, v \in e_1$, $v,w \in e_2$.

For $0 \le i \le k$, let $P_i(t)$ be the set of ordered pairs $(e_1,e_2)$ of hyperedges with $|e_1\cap e_2| = i$. Note that $|P_k(t)| = e(t)$.
We then have that $|P'(t)| = \sum_{i=1}^k i|P_i(t)|$ and $P(t) = \sum_{i=1}^{k-1} i(k-i)^2 |P_i(t)|$.
This gives that
\begin{align*}
|P'(t)| - k e(t) = \sum_{i=1}^{k-1} i|P_i(t)| \le |P(t)|  &\le (k-1)^2 |P'(t)| \\
\left(k^2+1\right)^t|P'(0)| - k (k+1)^te(0) \le|P(t)| &\le (k-1)^2\left(k^2+1\right)^t|P'(0)|,
\end{align*}
which completes the proof.
\end{proof}

We next have the following lemma.
\begin{lemma} \label{lem:triangles}
The number of hypertriangles in $H_t$ is $\Theta\left( \left((k-1)^3 + 3(k-1)\right)^t\right)$.
\end{lemma}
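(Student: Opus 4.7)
The plan is to mimic the strategy of Lemma~\ref{lem:paths}: introduce an auxiliary set $T'(t)$ that admits a clean one-step recursion, and then relate it to the set $T(t)$ of hypertriangles up to a controllable error. I would define $T'(t)$ as the set of 6-tuples $(u,e_1,v,e_2,w,e_3)$ with $u,v,w \in V(H_t)$ pairwise distinct, $e_1,e_2,e_3 \in E(H_t)$ not necessarily distinct, and $u,v \in e_1$, $v,w \in e_2$, $w,u \in e_3$. Every hypertriangle lies in $T'(t)$, so $|T(t)| \le |T'(t)|$, giving the upper bound once $|T'(t)|$ is controlled.

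The core step is computing $|T'(t)|$ exactly. Fixing a tuple $(u,e_1,v,e_2,w,e_3) \in T'(t-1)$, I would enumerate its descendants in $T'(t)$ by splitting on how many of $u^*,v^*,w^*$ are clones. Since distinct parents have distinct descendants, the distinctness of $u^*,v^*,w^*$ is automatic. In the all-parents case, each $e_i^*$ must retain both relevant distinct parent vertices, which excludes the single descendant that clones one of them; this leaves $k-1$ options per edge, giving $(k-1)^3$. If exactly one vertex is cloned---say $u^* = u'$---the two edges required to contain $u'$ are uniquely forced to be $e_i - u + u'$, while the third edge has $k-1$ valid descendants; summing over the three choices of which vertex is the clone yields $3(k-1)$. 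Any case with two or more clones among $\{u^*,v^*,w^*\}$ would require some hyperedge descendant to contain two distinct clones, which is impossible since each descendant of a hyperedge contains at most one clone; these cases contribute zero. Conversely, Lemma~\ref{lem:inclusion1} shows every tuple in $T'(t)$ arises from a predecessor tuple in $H_{t-1}$, and a brief check (again using that no descendant contains two clones) rules out predecessors with coinciding vertices. This yields $|T'(t)| = \bigl((k-1)^3 + 3(k-1)\bigr)^t |T'(0)|$.

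For the matching lower bound on $|T(t)|$, I would bound the discrepancy $|T'(t) \setminus T(t)|$, which consists of tuples where at least two of $e_1,e_2,e_3$ coincide. A tuple with $e_1 = e_2$ satisfies $u,v,w \in e_1$ and $u,w \in e_3$, so its count is at most $(k-2)\sum_{(e_1,e_3)} |e_1 \cap e_3|(|e_1 \cap e_3|-1) \le (k-1)(k-2)\,|P'(t)| = O((k^2+1)^t)$, using the quantity $|P'(t)|$ from the proof of Lemma~\ref{lem:paths}. Summing the three symmetric coincidence patterns, $|T'(t)\setminus T(t)| = O((k^2+1)^t)$. The main obstacle is verifying this error is asymptotically negligible relative to the main term, which reduces to $(k-1)^3 + 3(k-1) > k^2 + 1$, i.e., $k^3 - 4k^2 + 6k - 5 > 0$; this holds for all $k \ge 3$. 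Provided $|T'(0)| > 0$---which holds whenever $H_0$ contains a hyperedge for $k \ge 3$, by taking $e_1 = e_2 = e_3$ and any three distinct vertices of that hyperedge---we conclude $|T(t)| = \Omega\bigl(((k-1)^3 + 3(k-1))^t\bigr)$, completing the $\Theta$-estimate. The $k = 2$ case is a direct computation in the ${\rm ILT}'$ graph model: each triangle has exactly four triangle descendants (itself plus the three single-clone variants).
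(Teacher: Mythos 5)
Your proposal is correct and follows essentially the same route as the paper: the same auxiliary set $T'(t)$ of 6-tuples with possibly coincident hyperedges, the same descendant count $(k-1)^3+3(k-1)$ split by how many of $u^*,v^*,w^*$ are clones, the same appeal to Lemma~\ref{lem:inclusion1} to show every tuple in $T'(t)$ descends from one in $T'(t-1)$, and the same bounding of the degenerate tuples via $|P'(t)| = (k^2+1)^t|P'(0)|$ from Lemma~\ref{lem:paths}. The only divergence is cosmetic: you handle $k=2$ as a separate case, whereas in the paper's explicit error terms the coefficients $(k-2)^2$ and $k(k-1)(k-2)$ vanish at $k=2$, so no case split is needed there.
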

\begin{proof}
Let
\begin{align*}
T'(t) = \left\{
\begin{array}{ll}
(u,e_1,v,e_2,w,e_3) :& u,v,w \in V(H_t) \text{ distinct}, e_1,e_2, e_3 \in E(H_t) \\
& u \in e_1\cap e_3, v \in e_1 \cap e_2, w \in e_2 \cap e_3
\end{array}
\right\} .
\end{align*}
Note that, while closely related, this is not the same as the set of hypertriangles as we do not insist that the edges $e_1,e_2$ and $e_3$ are distinct. We find an exact value for $|T'(t)|$ in terms of $t$ and $|T'(0)|$, which will enable us to bound the number of hypertriangles.

Fix some $(u,e_1,v,e_2,w,e_3)  \in T'(t-1)$. We wish to count the number of descendants $$(u^*,e_1^*,v^*,e_2^*,w^*,e_3^*) $$ has in $T'(t)$. If $v^* = v',$ then for $v^* \in e_1^* \cap e_2^*$ we must have $e_1^* = e_1 - v + v'$ and $e_2^* = e_2 - v + v'$. Since $u' \not\in e_1 - v + v'$ and $w' \in e_2 - v + v'$ this means that $u^* = u$ and $w^* = w$. Since $u^*$ and $w^*$ are in $e_3^*$, $e_3^*$ must be $e_3$ or $e_3 - x + x'$ for some $x \in e_3$ not equal to $u$ or $w$, and indeed each of these $k-1$ choices for $e_3^*$ gives a $(u^*,e_1^*,v^*,e_2^*,w^*,e_3^*) $ in $T'(t)$.

An analogous argument in the cases $u^* = u'$ and $w^* = w'$ show that if one of $u^*,v^*,w^*$ is a clone then the other two are not, and there are $3(k-1)$ descendants $(u^*,e_1^*,v^*,e_2^*,w^*,e_3^*) $ in $T'(t)$ of this form.

Otherwise, none of $u^*,v^*,w^*$ is a clone. We then have that $e_1^*$ must be $e_1$ or $e_1 - x + x'$ for some $x \in e_1 - u - v$, $e_2^*$ must be $e_2$ or $e_2 - y + y'$ for some $y \in e_2 - v - w$, and $e_3^*$ must be $e_3$ or $e_3 - z + z'$ for some $z \in e_3 - u - w$. Any combination of these gives a $(u^*,e_1^*,v^*,e_2^*,w^*,e_3^*) $ in $T'(t)$, and so there are $(k-1)^3$ contributing to the count. In total, each $(u,e_1,v,e_2,w,e_3)  \in T'(t-1)$ has $(k-1)^3 + 3(k-1)$ descendants $(u^*,e_1^*,v^*,e_2^*,w^*,e_3^*) $ in $T'(t)$ giving $|T'(t)| \ge \left((k-1)^3 + 3(k-1)\right)|T'(t-1)|$.

In the other direction, suppose we have some  $(u^*,e_1^*,v^*,e_2^*,w^*,e_3^*) $ in $T'(t)$. Consider their respective predecessors $u,e_1,v,e_2,w$ and $e_3$ in $H_{t-1}$. We know that $u,v,w$ must be distinct: if say $u = v$ then either $u^* = v^*$, contradicting that $(u^*,e_1^*,v^*,e_2^*,w^*) \in T'(t)$, or $\{u^*,v^*\} = \{v,v'\}$. This in turn contradicts that there is a hyperedge $e_1^*$ containing both. An analogous argument shows that $v \ne w$ and $w\ne u$. Lemma~\ref{lem:inclusion1} provides that $u \in e_1 \cap e_3$, $v \in e_1 \cap e_2$ and $w \in e_2 \cap e_3$, so $(u,e_1,v,e_2,w,e_3)  \in T'(t-1)$. Hence, every 6-tuple in $T'(t)$ is a descendant of a 6-tuple in $T'(t-1)$, and in particular, $|T'(t)| = \left((k-1)^3 + 3(k-1)\right)|T'(t-1)|$.
Iterating this, we obtain that $|T'(t)| = \left((k-1)^3 + 3(k-1)\right)^t|T'(0)|.$

Now, let $T(t)$ be the set of hypertriangles in $H_t$. Note that $|T(t)|$ is the number of 6-tuples $(u,e_1,v,e_2,w,e_3)$ in $T'(t),$ where $e_1,e_2$ and $e_3$ are all distinct. Hence, we have that $$|T(t)| \le |T'(t)| = \left((k-1)^3 + 3(k-1)\right)^t|T'(0)|.$$

For a lower bound, we count the number of 6-tuples where $e_1, e_2$ and $e_3$ are not distinct.
If $e_1 = e_2 \ne e_3$, then $u$ and $w$ are distinct elements in $e_1 \cap e_3 = e_2 \cap e_3$ and $v \in e_1 - u - w$. Recalling that $|P_i(t)|$ is the number of pairs of edges intersecting in $i$ vertices, we find that there are $\sum_{i=2}^{k-1} i(i-1)(k-2)|P_i(t)|$ such 6-tuples. Similarly, there are $\sum_{i=2}^{k-1} i(i-1)(k-2)|P_i(t)|$ with $e_2 = e_3 \ne e_1$ and with $e_3 = e_1 \ne e_2$.

Finally, note that when $e_1 = e_2 = e_3$ then we just have $u,v,w$ distinct vertices in $e_1$ and so there are $k(k-1)(k-2)e(t)$ 6-tuples in $T'(t)$ with $e_1 = e_2 = e_3$. Putting these together gives $|T'(t)| = |T(t)| + 3\sum_{i=2}^{k-1} i(i-1)(k-2)P_i(t) + k(k-1)(k-2)e(t)$.

To bound $\sum_{i=2}^{k-1} i(i-1)|P_i(t)|$, we use that $ \sum_{i=1}^k i|P_i(t)| = |P'(t)| = (k^2+1)^t|P'(0)|$ as calculated in the proof of Lemma~\ref{lem:paths}.
In particular, we have that
\begin{equation*}
  \sum_{i=2}^{k-1} i(i-1)|P_i(t)| \le  (k-2)\left(\sum_{i=1}^k i|P_i(t)| \right)
   \le (k-2) (k^2+1)^t|P'(0)|.
\end{equation*}

We next have that
\begin{align*}
|T(t)| &= |T'(t)| - 3(k-2)\sum_{i=2}^{k-1} i(i-1)|P_i(t)| - k(k-1)(k-2)e(t) \\
&\ge \left((k-1)^3 + 3(k-1)\right)^t|T'(0)| - 3(k-2)^2(k^2+1)^t|P'(0)|   - k(k-1)(k-2)(k+1)^te(0),
\end{align*}
which completes the proof.
\end{proof}
As an immediate consequence of Lemmas~\ref{lem:paths} and \ref{lem:triangles}, we obtain Theorem~\ref{thm:HC_1} on the value of the $\HC_1$ clustering coefficient on ILTH hypergraphs. To contextualize the result of Theorem~\ref{thm:HC_1}, we compare $\HC_1(H_t)$ to $\HC_1$ for other $k$-uniform hypergraphs. For the complete $k$-uniform hypergraph $K^{(k)}_n$ it is straightforward to derive by counting choices of $u,v,w$ and the edges containing them that
$$\HC_1\left(K^{(k)}_n\right) = \frac{ \binom{n}{3}\left(\binom{n-2}{k-2}\right)^3}{ \binom{n}{3}\left(\binom{n-2}{k-2}\right)^2} = \binom{n-2}{k-2}.$$
When $n = n(t) = 2^tn(0)$, this gives $\HC_1(K^{(k)}_n) = \Theta \left( 2^{(k-2)t} \right)$, which is larger than $\HC_1(H_t) ,$ as expected.

We consider the expected value of $\HC_1$ in the random hypergraph $G(n,k,p)$. Here, given a path $(u,e_1,v,e_2,w)$  of length two, the expected number of hypertriangles of the form $(u,e_1,v,e_2,w,e)$ is $\binom{n-2}{k-2}p$ . This gives
$$\mathbb{E}\left( \HC_1(G(n,k,p)) \right)= \binom{n-2}{k-2}p.$$

Let $n = n(t)= 2^tn(0)$ and $p = \frac{ (k+1)^te(0)}{\binom{n}{k}}$. We then have that $$\mathbb{E}\left( \HC_1(G(n,k,p)) \right) =  \frac{ \binom{n-2}{k-2} (k+1)^te(0)}{ \binom{n}{k}} = \frac{k(k-1) (k+1)^te(0) }{2^tn(0)(2^tn(0)-1)}= \Theta\left( \left( \frac{k+1}{4}\right)^t \right).$$
As $\frac{k+1}{4} < \frac{(k-1)^3 + 3(k-1)}{k^2 + 1}$, the clustering coefficient $\HC_1$ for $H_t$ grows faster than that for the random hypergraph of the same expected average degree.

\subsection{The clustering coefficient $\HC_2$}\label{sechc2}
In this subsection, we prove Theorem~\ref{thm:HC_2}. We first introduce a useful set of 5-tuples:
\begin{align*}
A(t) = \left\{
\begin{array}{ll}
(u,e_1,v,e_2,w) :& u,v,w \in V(H_t) \text{ distinct}, e_1,e_2 \in E(H_t), \\
& \text{ for some } e_3  \in E(H_t) \text{ such that } u \in e_1\cap e_3, v \in e_1 \cap e_2, w \in e_2 \cap e_3
\end{array}
\right\}.
\end{align*}
One view of a 5-tuple in $A(t)$ is as a (possibly degenerate) path of length 2 that can be completed to a (possibly degenerate) hypertriangle. We have the following lemma counting
the elements of $A(t)$, which will greatly assist in estimating $\HC_2$ in ILTH hypergraphs.

\begin{lemma} \label{lem:A}
For all nonnegative integers $t,$ $|A(t)| =  \left(k^2\right)^t |A(0)|.$
\end{lemma}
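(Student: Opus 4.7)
The plan is to mirror the strategy of Lemmas~\ref{lem:paths} and \ref{lem:triangles}: I will show that each element of $A(t-1)$ has exactly $k^2$ descendants in $A(t)$, and conversely every element of $A(t)$ is a descendant of some (necessarily unique) element of $A(t-1)$. Together these give $|A(t)|=k^2|A(t-1)|$, and iterating yields the formula.

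For the forward count, I fix $(u,e_1,v,e_2,w)\in A(t-1)$ with witness $e_3\in E(H_{t-1})$ and enumerate the descendant tuples $(u^*,e_1^*,v^*,e_2^*,w^*)$ that lie in $A(t)$, case-splitting on which of $u,v,w$ are cloned. In the case with no clones, $e_1^*$ can be $e_1$ or any $e_1-x+x'$ with $x\in e_1\setminus\{u,v\}$, giving $k-1$ options, and similarly for $e_2^*$, with $e_3$ itself as witness, yielding $(k-1)^2$ descendants. Cloning only $u$ forces $e_1^*=e_1-u+u'$, leaves $k-1$ choices for $e_2^*$, and admits $e_3-u+u'$ as witness, contributing $k-1$; cloning only $w$ is symmetric and contributes $k-1$. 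Cloning only $v$ forces $e_1^*=e_1-v+v'$ and $e_2^*=e_2-v+v'$ with $e_3$ as witness, contributing $1$. The total is $(k-1)^2+2(k-1)+1=k^2$. The remaining four cases (at least two of $u,v,w$ cloned) must vanish, and this is where the hypergraph structure does the real work: if both $u$ and $v$ are cloned, no descendant of $e_1$ can contain both $u'$ and $v'$, since a hyperedge in $H_t$ contains at most one clone; the same rules out cloning both $v,w$. Cloning both $u$ and $w$ (with $v$ fixed) is the subtler case: $e_1^*$ and $e_2^*$ are forced and are legal, but then the required witness $e_3^*$ would have to contain both $u'$ and $w'$, which is again impossible.

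For the reverse direction, I take $(u^*,e_1^*,v^*,e_2^*,w^*)\in A(t)$ with some witness $e_3^*\in E(H_t)$ and pass to predecessors $u,v,w,e_1,e_2,e_3$ in $H_{t-1}$. Lemma~\ref{lem:inclusion1} immediately yields $u\in e_1\cap e_3$, $v\in e_1\cap e_2$, and $w\in e_2\cap e_3$, so $e_3$ is a witness. To finish I need that $u,v,w$ are distinct; this follows from distinctness of $u^*,v^*,w^*$ together with the "no hyperedge contains a vertex and its clone" principle. For example, if $u=w$ then $\{u^*,w^*\}=\{u,u'\}$, but these would have to both lie in $e_3^*$, which is impossible. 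Thus $(u,e_1,v,e_2,w)\in A(t-1)$, every element of $A(t)$ is accounted for, and the recursion $|A(t)|=k^2|A(t-1)|$ is established.

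The main obstacle I anticipate is the witness bookkeeping: the existence of $e_3^*$ is precisely what distinguishes $A(t)$ from the full set of degenerate length-two paths, so the case analysis must explicitly exhibit a valid $e_3^*$ in each contributing case and rigorously rule one out in each vanishing case. Pairing the four contributing descendant types with the natural witnesses $e_3$, $e_3-u+u'$, $e_3-w+w'$, and $e_3$ respectively should make the argument clean, and the vanishing cases all reduce to the single observation that any hyperedge of $H_t$ can contain at most one clone.
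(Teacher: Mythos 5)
Your proposal is correct and follows essentially the same route as the paper's proof: a forward count showing each tuple in $A(t-1)$ has exactly $k^2$ descendants (organized by which of $u,v,w$ is cloned, with the same witnesses $e_3$, $e_3-u+u'$, $e_3-w+w'$ and the same vanishing cases via the ``no hyperedge contains two clones'' observation), together with the reverse direction via Lemma~\ref{lem:inclusion1} and the distinctness argument. The only difference is cosmetic: the paper first splits on whether $v^*=v'$ and then sub-cases on $u^*,w^*$, whereas you split directly on the cloned subset of $\{u,v,w\}$; the counts $1+2(k-1)+(k-1)^2=k^2$ agree.
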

\begin{proof}
For a fixed 5-tuple $(u,e_1,v,e_2,w)  \in A(t-1)$, we count the number of descendants $(u^*,e_1^*,v^*,e_2^*,w^*) $ this has in $A(t)$. If $v^* = v'$, then for $v^* \in e_1^* \cap e_2^*$ we must have $e_1^* = e_1 - v + v'$ and $e_2^* = e_2 - v + v'$. Since $u' \not\in e_1 - v + v'$ and $w' \in e_2 - v + v'$ this means that $u^* = u$ and $w^* = w$, and we know there is a hyperedge $e_3$ containing both. Thus, there is one descendant  $(u^*,e_1^*,v^*,e_2^*,w^*) $ in $A(t)$ with $v^* = v'$.

Otherwise, suppose $v^*=v$. We cannot have both $u^* = u'$ and $w^* = w'$ as there does not exist any hyperedge in $E(H_t)$ containing both $u'$ and $w'$.
We can have $u^* = u'$ and $w^* = w$, as the hyperedge $e_3 - u + u' \in e(H_t)$ contains both. In this case, $e_1^*$ must be $e_1 - u + u'$  and $e_2^*$ must be $e_2$ or $e_2 - y + y'$ for some $y \in e_2 - v - w$, giving $k-1$ descendants in $A(t)$.
Similarly, we can have $u^* = u$ and $w^* = w'$, and there are a further $k-1$ descendants in $A(t)$ of this form.

Finally, we can have $u^* = u$ and $w^*=w$ as we know the hyperedge $e_3$ contains both. In this case $e_1^*$ must be $e_1$ or $e_1 - x + x'$ for some $x \in e_1 - v - u$  and $e_2^*$ must be $e_2$ or $e_2 - y + y'$ for some $y \in e_2 - v - w$, giving $(k-1)^2$ descendants in $A(t)$ of this form.
In total, each $(u,e_1,v,e_2,w)  \in A(t-1)$ has $k^2$ descendants $(u^*,e_1^*,v^*,e_2^*,w^*) $ in $A(t)$ giving $|A(t)| \ge k^2|A(t-1)|$.

In the other direction, suppose we have some  $(u^*,e_1^*,v^*,e_2^*,w^*) $ in $A(t)$. Let $e_3^*$ be a hyperedge in $H_t$ containing both $u^*$ and $w^*$. Consider their respective predecessors $u,e_1,v,e_2,w$ and $e_3$ in $H_{t-1}$. We know that $u,v,w$ must be distinct: if say $u = v$ then either $u^* = v^*$, contradicting   $(u^*,e_1^*,v^*,e_2^*,w^*) \in A(t)$, or $\{u^*,v^*\} = \{v,v'\}$, contradicting that there is a hyperedge $e_1^*$ containing both. An analogous argument shows that $v \ne w$ and $w\ne u$.  Applying Lemma~\ref{lem:inclusion} shows that $u \in e_1 \cap e_3$, $v \in e_1 \cap e_2$ and $w \in e_2 \cap e_3$, so $(u,e_1,v,e_2,w)  \in A(t-1)$.
Hence, every 5-tuple in $A(t)$ is a descendant of a 5-tuple in $A(t-1)$, and in particular, $|A(t)| = k^2|A(t-1)|$. Iterating this, we have that $|A(t)| = \left(k^2\right)^t|A(0)|.$
\end{proof}

We can now use Lemma~\ref{lem:A} to prove Theorem~\ref{thm:HC_2}.

\begin{proof}[Proof of Theorem~\ref{thm:HC_2}]
Recall that
\begin{align*}
\HC_2(H) = \frac{\text{number of \hyperpaths  $(u,e_1,w,e_2,v)$, where $u$ and $v$ are in a hyperedge} }{\text{number of \hyperpaths  of length two} }.
\end{align*}
Let $\Lambda(t)$ be the number of \hyperpaths  $(u,e_1,v,e_2,w),$ where $u \sim w$. We then have that $\Lambda(t) \subseteq A(t)$. Also, a 5-tuple $(u,e_1,v,e_2,w)$ is in $A(t)$ but not $\Lambda (t)$ if and only if $e_1 = e_2,$ and there are $k(k-1)(k-2)e(t)$ such 5-tuples.
Thus, we have that $$|\Lambda(t)| = |A(t)| - k(k-1)(k-2)e(t) =  k^{2t}|A_0| - k(k-1)(k-2)(k+1)^te(0) = \Theta \left( k^{2t} \right).$$

Combining this with Lemma~\ref{lem:paths}, we derive that
$
\HC_2 = \Theta  \left(  \left( \frac{k^2}{k^2 + 1} \right)^t \right),
$
as required.
\end{proof}

We contextualize these results by comparing them to the random $k$-uniform hypergraph $G(n,k,p)$ with the same expected average degree. We derive a lemma computing the expected value of
$\HC_2$ on random hypergraphs.

\begin{lemma}\label{lem:HC_2 random}
For a given $k$ and $p$, we have that
$$\mathbb{E}\left(\HC_2(G(n,k,p))\right) = 1 - (1-p)^{\binom{n-2}{k-2}}.$$
\end{lemma}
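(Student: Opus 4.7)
The plan is to reduce $\mathbb{E}[\HC_2(G(n,k,p))]$ to the elementary probability that two fixed vertices are adjacent in $G(n,k,p)$. For any two distinct vertices $a, b \in V$, the event $\{a \sim b\}$ occurs iff at least one of the $\binom{n-2}{k-2}$ many $k$-subsets of $V$ containing both $a$ and $b$ is a hyperedge of $G(n,k,p)$. Since each such subset is included independently with probability $p$, the complement rule yields $\mathbb{P}[a \sim b] = 1 - (1-p)^{\binom{n-2}{k-2}}$.

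To lift this pairwise fact to the stated identity, I would express $\HC_2(G) = Q(G)/P(G)$, where $P(G)$ counts length-two paths in $G$ and $Q(G)$ counts those with adjacent endpoints, and then invoke the vertex-transitive symmetry of $G(n,k,p)$. The marginal distribution of the endpoint pair of a uniformly random path (averaged over the model) is symmetric across all ordered pairs of distinct vertices, so the probability that this pair is adjacent coincides with $\mathbb{P}[a \sim b]$ for any specific pair, giving the identity.

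Alternatively, one can proceed by linearity of expectation on $Q$ and $P$ separately. Sum over all 5-tuple templates $(u, e_1, w, e_2, v)$ satisfying the structural requirements ($u,v,w$ distinct, $e_1 \ne e_2$, $u,w \in e_1$, $w,v \in e_2$). For each template, $\mathbb{P}[\text{template is a path}] = p^2$ by independence of the two edge indicators. For a \emph{generic} template in which neither $e_1$ nor $e_2$ itself contains both endpoints $u$ and $v$, the edge slots determining $\{e_1, e_2 \in E\}$ are disjoint from those determining $\{u \sim v\}$, so the two events are independent and their joint probability is $p^2 (1 - (1-p)^{\binom{n-2}{k-2}})$. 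Taking the ratio of the resulting sums recovers the claimed formula.

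The main obstacle is the non-generic templates, those in which $e_1$ or $e_2$ already contains both $u$ and $v$: in those cases $\{\text{template is a path}\}$ entails $\{u \sim v\}$, breaking the independence used above. One must show — most cleanly via the vertex-transitive symmetry noted in the second paragraph — that these corrections do not disturb the identity, or equivalently justify the interchange $\mathbb{E}[Q/P] = \mathbb{E}[Q]/\mathbb{E}[P]$ in this setting, which is the delicate part of the argument.
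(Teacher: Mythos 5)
Your first paragraph is, essentially verbatim, the paper's entire proof: the authors fix a path $(u,e_1,v,e_2,w)$, note that there are $\binom{n-2}{k-2}$ $k$-sets containing both endpoints, and conclude that the endpoints are adjacent with probability $1-(1-p)^{\binom{n-2}{k-2}}$ by independence of the hyperedge indicators. None of the additional machinery you propose --- writing $\HC_2 = Q/P$, summing over $5$-tuple templates, isolating the non-generic templates in which $e_1$ or $e_2$ already contains both endpoints, or justifying the interchange $\mathbb{E}[Q/P]=\mathbb{E}[Q]/\mathbb{E}[P]$ --- appears in the paper; the authors simply interpret $\HC_2$ as the probability that the endpoints of a given path are adjacent and compute that single conditional probability. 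The ``delicate part'' you flag is therefore a real issue with the lemma read literally as an exact expectation of the random ratio (the path's presence is not independent of the adjacency event, and the expectation of a ratio is not the ratio of expectations), but it is an issue the published proof silently shares rather than one your approach introduces. In short: your opening computation is the whole of the paper's argument, and the rest of your proposal is an honest accounting of gaps the paper does not acknowledge, not a deficiency relative to it.
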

\begin{proof}
Suppose that we are given a \hyperpath $(u,e_1,v,e_2,w)$ and we wish to know the probability that the two vertices $u,w$ lie in some hyperedge. There are $\binom{n-2}{k-2}$ $k$-sets containing both $u$ and $w$ and the probability that none of them is a hyperedge  of $G(n,k,p)$ is $(1-p)^{\binom{n-2}{k-2}}$. Thus, the probability that $u \sim w$ is $1 - (1-p)^{\binom{n-2}{k-2}}$.
\end{proof}

We  compare $H_t$ to a random hypergraph with the same number of vertices and the same expected average degree. Set $n = 2^tn(0)$ and choose $p$ such that $\binom{n}{k}p = (k+1)^te(0)$.
We then have that
\begin{align*}
\mathbb{E}\left(\HC_2(G(n,k,p))\right) \ge  1 - (1-p)^{\binom{n-2}{k-2}} &\ge 1 - \exp{\left(-p\binom{n-2}{k-2}\right)} \\
 &\ge 1 - \exp{\left(-c\left(\frac{k+1}{4}\right)^t\right)},\end{align*}
where $c$ depends only on $k,$ $n(0),$ and $e(0)$. Hence, we conclude that $\mathbb{E}\left(\HC_2(G(n,k,p))\right)$ is at least $1 - \exp{\left(-c\left(\frac{k+1}{4}\right)^t\right)}$.
For $k\ge 4$, this quantity tends to 1 as $t$ tends to infinity, and it does so doubly exponentially fast. On the other hand, we have that $\HC_2(H_t) =O\left( \left(\frac{k^2}{k^2+1} \right)^t \right)$ which tends to $0$ exponentially fast as $t$ tends to infinity. Thus, we find that by this measure the clustering for $H_t$ is extremely low compared to the random hypergraph with the same expected average degree. If $k=3$, then $\mathbb{E}\left(\HC_2(G(n,k,p))\right)$ is at least the constant $1 - e^{-c}$, which is larger than $\HC_2(H_t)=O \left( \left(\frac{9}{10} \right)^t \right)$.

Measured by the clustering coefficient $\HC_1$, the hypergraph $H_t$ has higher clustering than in comparable hypergraphs, but this fails for $\HC_2$. The reason for the discrepancy is that the two clustering coefficients are counting different structures. Given a pair of intersecting edges $e_1,e_2$, the value of $\HC_2$ counts how many pairs of vertices $u \in e_1 - e_2, w \in e_2-e_1$ there are that are contained in some hyperedge $e_3$. As this is low for $H_t$ compared to random hypergraphs, fewer of those pairs are contained in any hyperedge than we might expect. The value of $\HC_1$ roughly counts how many edges $e_3$ intersect both $e_1$ and $e_2$ to make a hypertriangle. As this is large for $H_t$ when compared to random hypergraphs, there are more of these edges than we might expect. Hence, relative to the random hypergraph, fewer pairs of vertices  $u \in e_1 - e_2, w \in e_2-e_1$ are contained in a hyperedge, but those that are contained in an hyperedge must be contained in many hyperedges.

\subsection{A variant of ILTH with large $\HC_2$ values}
To remedy the situation with ILTH having lower $\HC_2$ values than random hypergraphs, we consider a variant of the model where clones and their parents are in certain hyperedges. Such a variant is a natural one, as we may expect newly formed hyperedges to include both parent and child vertices.

Let $H^{(2)}_0$ be a fixed $k$-uniform hypergraph and we iteratively construct $H^{(2)}_t,$ where $t\ge 1$ as follows. Suppose that we have $H^{(2)}_t$. For each $v \in V(H^{(2)}_t)$, add $k$ vertices $v$ and $v^1,v^2,\dots, v^{k-1}$ to $H^{(2)}_{t+1}$. We call these $v^i$ the \emph{clones} of $v$. For each $e \in E(H^{(2)}_t)$, add to $H^{(2)}_{t+1}$ the hyperedge $e$ and each of the edges $e - x + x^i,$ where $x$ is a vertex in $e$ and $1 \le i \le k-1$. In addition, for each $v \in V(H^{(2)}_t)$ add to $H^{(2)}_{t+1}$ the hyperedge $\{v,v^1,v^2,\dots,v^{k-1}\}$ to $H^{(2)}_{t+1}$. We refer to the model as \emph{ILTH}$_2$, and hypergraphs generated by the model are \emph{ILTH$_2$ hypergraphs}. See Figure~\ref{h2}.
\begin{figure}[h]
\begin{center}
\epsfig{figure=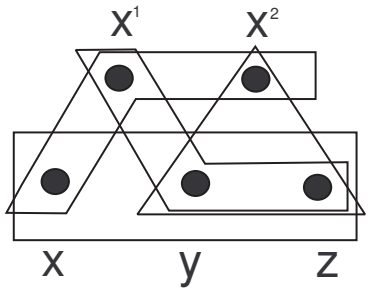, scale=1.5}
\caption{The ILTH$_2$ model applied to cloning $x$ in the hyperedge $xyz$.}\label{h2}
\end{center}
\end{figure}
The ILTH$_2$ model is motivated by the desire to have clones and parent adjacent, as in the original ILT model. While the models are distinct,  ILTH$_2$ hypergraphs share properties with the ILTH hypergraphs such as densification and low distances. One key difference between ILTH and ILTH$_2$ is the clustering coefficient $\HC_2.$ We have the following theorem, whose proof is analogous to the one of Theorem~\ref{thm:HC_2} and so is omitted.

\begin{theorem}\label{thm:HC_2 second model}
For nonnegative integers $t,$ we have that
$$
\HC_2(H^{(2)}_t) = \Theta\left( \left(  1 -  \frac{(k-1)^2}{(k^2 - 2k+1)^2 + k-1}  \right)^t \right).
$$
\end{theorem}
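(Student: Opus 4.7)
The plan is to mirror the proof of Theorem~\ref{thm:HC_2}, adapted to the ILTH$_2$ dynamics. I would define analogs $P^{(2)}(t) = \{(e_1,v,e_2) : v \in e_1 \cap e_2 \text{ in } H^{(2)}_t\}$ and $A^{(2)}(t)$ of the sets $P'(t)$ and $A(t)$ from Lemmas~\ref{lem:paths} and~\ref{lem:A}. One then expresses $\HC_2(H^{(2)}_t)$, up to subtracting the degenerate 5-tuples with $e_1 = e_2$, as the ratio of $|A^{(2)}(t)|$ to a count comparable to $|P^{(2)}(t)|$.

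The next step is a descendant-counting argument in the style of Lemma~\ref{lem:A}. In ILTH$_2$ each vertex $v$ has $k$ descendants ($v$ itself and the clones $v^1,\dots,v^{k-1}$), and each hyperedge $e$ has $k(k-1)+1$ substitution descendants $e - x + x^i$. The analogous case split on which of $u^*,v^*,w^*$ are clones (and with which clone indices), together with enumerating the valid choices of $e_1^*$ and $e_2^*$ among the descendants of $e_1$ and $e_2$ that contain the required vertices, yields an explicit multiplier for the number of descendants in $A^{(2)}(t)$ of a fixed 5-tuple in $A^{(2)}(t-1)$, and likewise for $P^{(2)}$. The key new feature of ILTH$_2$ is that each $v \in V(H^{(2)}_{t-1})$ contributes a clone-bundle hyperedge $\{v, v^1,\dots, v^{k-1}\}$ in $H^{(2)}_t$ which is not a descendant of any hyperedge in $H^{(2)}_{t-1}$. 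Triples and 5-tuples involving these clone-bundles must be added as a separate forcing term, using the observation that any vertex in a clone-bundle has the same predecessor, which makes the enumeration tractable. These new contributions are $O(n(t-1))$ or $O(e(t-1))$ times a polynomial in $k$, which grow strictly slower than the dominant descendant multipliers and are therefore absorbed into the $\Theta$ bounds.

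The final step is to solve the resulting linear recurrences for $|P^{(2)}(t)|$ and $|A^{(2)}(t)|$, subtract the degenerate 5-tuples (analogous to the $k(k-1)(k-2)e(t)$ correction in Theorem~\ref{thm:HC_2}), and take the ratio. The main obstacle is the combinatorial bookkeeping: the clone-bundle hyperedges break the tidy near-bijection between elements of $A^{(2)}(t-1)$ and $A^{(2)}(t)$ that made the proof of Theorem~\ref{thm:HC_2} clean. One must partition elements of $A^{(2)}(t)$ by the nature (descendant versus clone-bundle) of each hyperedge involved, avoid double counting across these partitions, and identify the dominant contributions so that the ratio simplifies to the form $1 - (k-1)^2/((k-1)^4 + (k-1))$ appearing in the theorem.
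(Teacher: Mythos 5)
Your proposal takes essentially the same route as the paper: the paper omits the proof of Theorem~\ref{thm:HC_2 second model}, stating only that it is analogous to that of Theorem~\ref{thm:HC_2}, and your outline is exactly that analogous argument, with the correct descendant counts ($k$ descendants per vertex, $k(k-1)+1$ per hyperedge) and the correct observation that the clone-bundle hyperedges $\{v,v^1,\dots,v^{k-1}\}$ contribute only lower-order terms absorbed by the $\Theta$ bounds. One caveat: if you carry the computation through, the per-step multipliers come out as $(k^2-2k+2)^2+k-1$ for the path-triples and $(k^2-2k+2)^2+k-1-(k-1)^2$ for the completable $5$-tuples, so the ratio is $1-\frac{(k-1)^2}{(k^2-2k+2)^2+k-1}$, which matches the paper's own subsequent discussion of the random-hypergraph comparison; the $(k^2-2k+1)^2$ in the theorem statement appears to be a typo, and you should not force your bookkeeping to reproduce the form $1-(k-1)^2/((k-1)^4+(k-1))$.
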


We compare $H^{(2)}_t$ to the random hypergraph with the same number of vertices and the same expected averaged degree. Set $n = k^tn(0)$ and choose $p$ such that the expected number of edges $\binom{n}{k}p$ is $e(t)$. In particular, we have that $$p = \Theta\left( \left(\frac{k^2 - k + 1}{k^k}\right)^t \right).$$

Applying Lemma~\ref{lem:HC_2 random}, we have that $$\mathbb{E}\left(\HC_2(G(n,k,p))\right)= \Theta\left( \left(\frac{k^2 - k + 1}{k^2}\right)^t \right).$$ For all $k \ge 2$, we find that
$$ \frac{k^2 - k + 1}{k^2} =  1 - \frac{k-1}{k^2} < 1 - \frac{(k-1)^2}{(k^2 - 2k + 2)^2 + k-1},$$
so the clustering coefficient $\HC_2$ is larger for $H^{(2)}_t$ than in random hypergraphs.

\section{Further directions}

We introduced the new ILTH model for complex hypergraphs. We found that ILTH hypergraphs densify over time and have low average distances. We considered motifs and found that for those occurring in the ILTH model, their counts grow faster than in random hypergraphs with the same expected average degree. The 2-sections of ILTH hypergraphs were shown to contain isomorphic copies of all graphs admitting a homomorphism to the 2-section of $H_0$ in Theorem~\ref{mainin}. We finished with an analysis of clustering coefficients, and it was shown that $\HC_1$ was larger in ILTH hypergraphs than in random hypergraphs. A similar result was proven for $\HC_2$ applied to a variant of ILTH, where parents are adjacent to their clones.

Several questions remain surrounding ILTH hypergraphs. We may consider variants of the model, and study properties of hypergraphs generated by the model. For example, we may allow hyperedges that are non-uniform orders, or randomize the model by adding random hyperedges to sets of clones. An open problem is to determine the age of ILTH hypergraphs; that is, what are the induced subhypergraphs of ILTH hypergraphs?

Another direction is to consider other notions of clustering in ILTH hypergraphs. Several hypergraph clustering coefficients were investigated in [17], for example, and it would be interesting to consider their values in the ILTH model.

\end{document}